\title{Ranked Enumeration for MSO on Trees via Knowledge Compilation} %
\titlerunning{Ranked Enumeration for MSO on Trees via Knowledge Compilation}
 \author{Antoine Amarilli}{LTCI, Télécom Paris, Institut
 polytechnique de Paris, France \and
\url{https://a3nm.net/}}{antoine.amarilli@telecom-paris.fr}{https://orcid.org/0000-0002-7977-4441}{Partially
supported by the ANR project EQUUS ANR-19-CE48-0019, by the
Deutsche Forschungsgemeinschaft (DFG, German Research Foundation) –
431183758, and by the ANR project ANR-18-CE23-0003-02 (“CQFD”).
This work was done in part while the author was visiting the Simons
Institute for the Theory of Computing. } 
 \author{Pierre Bourhis}{Univ. Lille, CNRS, Inria, Centrale Lille, UMR 9189 CRIStAL, F-59000 Lille, France}{%
   pierre.bourhis@univ-lille.fr
 }{%
   0000-0001-5699-0320
 }{}
 \author{Florent Capelli}{Univ. Artois, CNRS, UMR 8188, Centre de Recherche en Informatique de Lens (CRIL), F-62300 Lens, France
 \and \url{https://florent.capelli.me/}
 }{%
   capelli@cril.fr%
 }{%
   https://orcid.org/0000-0002-2842-8223%
 }{%
   This work was supported by project ANR KCODA, ANR-20-CE48-0004.%
 }
 \author{Mikaël Monet}{%
     Université de Lille, CNRS, Inria, UMR 9189 - CRIStAL, F-59000 Lille, France \and \url{https://mikael-monet.net/}%
 }{%
   mikael.monet@inria.fr%
 }{%
   0000-0002-6158-4607
 }{This work was done in part while the author was visiting the Simons Institute for the Theory of Computing.}
\authorrunning{A. Amarilli, P. Bourhis, F. Capelli and M. Monet} %
\keywords{Enumeration, knowledge compilation, monadic second-order logic} %
\DeclareMathOperator*{\argmax}{arg\,max}
\NewDocumentCommand{\parta}{+O{X} +O{D}}{\overline{#2^#1}}
\NewDocumentCommand{\lit}{+O{x} +O{d}}{\langle{#1:#2}\rangle}
\NewDocumentCommand{\var}{+m}{\mathsf{var}(#1)}
\NewDocumentCommand{\rel}{+m}{\mathsf{rel}(#1)}
\NewDocumentCommand{\op}{}{\mathsf{op}}
\NewDocumentCommand{\supp}{}{\mathsf{supp}}
\NewDocumentCommand{\dom}{}{\mathsf{dom}}
\NewDocumentCommand{\exit}{+m}{\mathsf{exit}(#1)}
\newcommand\RR{\mathbb{R}}
\newcommand\NN{\mathbb{N}}
\newcommand\pp{\mathbf{p}}
\newcommand\dd{\mathbf{d}}
\newcommand\Enum{\mathsf{RankEnum}}
\newcommand\get{\mathsf{Get}}
\newtheorem{result}{Result}
\begin{document}

\maketitle

\begin{abstract}
We study the problem of enumerating the satisfying assignments for certain circuit
classes from knowledge compilation, where assignments are ranked in a specific
order. In particular, we show how this problem can be used to efficiently perform ranked
enumeration of the answers to MSO queries over trees, with the order being given
by a ranking function satisfying a subset-monotonicity property.

Assuming that the 
number of variables is constant, we show that we can enumerate the satisfying
assignments in ranked order for so-called \emph{multivalued circuits} that are smooth,
decomposable, and in negation normal form (smooth multivalued DNNF). There is
no preprocessing and the enumeration delay is 
linear in the size of the circuit times the number of values, plus a logarithmic
term in
the number of assignments produced so far. If we further assume that the
circuit is deterministic (smooth multivalued d-DNNF), we can achieve
linear-time preprocessing in the circuit, and the delay only features the
logarithmic term.

\end{abstract}

\section{Introduction}
\label{sec:introduction}
Data management tasks often require the evaluation of queries on large
datasets, in settings where the number of query answers may be very large. For
this reason, the framework of \emph{enumeration algorithms} has been proposed as
a way to distinguish the \emph{preprocessing time} of query evaluation
algorithms and the maximal \emph{delay} between two successive
answers~\cite{Segoufin14,Wasa16}. Enumeration algorithms have been studied in
several contexts: for conjunctive
queries~\cite{bagan2007acyclic} and unions of conjunctive
queries~\cite{berkholz2018answeringucqs,carmeli2021enumeration} over relational
databases; for first-order logic over bounded-degree
structures~\cite{durand2007first}, structures with local bounded expansion~\cite{segoufin2017constant}, and
nowhere dense graphs~\cite{schweikardt2022enumeration}; and for monadic
second-order logic (MSO) over
trees~\cite{bagan2006mso,kazana2013enumeration,amarilli2017circuit}.

We focus on the setting of MSO over trees. In this context, 
the following enumeration result is already known.
For any fixed MSO query $Q$ (i.e., in \emph{data complexity}) where the 
free variables are assumed to be first-order, considering the 
answers of~$Q$ on a tree~$T$ given as input
(i.e., the functions that map the variables of~$Q$ to nodes of~$T$ in a way that satisfies~$Q$),
we can enumerate them with
linear preprocessing on the tree~$T$ and with constant delay. If the free variables
are second-order, then the delay is output-linear, i.e., linear in each
produced answer~\cite{bagan2006mso,amarilli2017circuit}. Further results are
known when the query is not fixed but given as input as a potentially
non-deterministic automaton~\cite{amarilli2019constant,amarilli2019enumeration},
or when maintaining the enumeration structure under tree updates~\cite{losemann2014mso,amarilli2019enumeration}.

However, despite their favorable delay bounds,
a shortcoming of these enumeration algorithms is that they enumerate answers in
an opaque order which cannot be controlled.
This is in contrast
with application settings where answers should be enumerated, e.g., by
decreasing order of relevance, or focusing on the \emph{top-$k$} 
most relevant answers. This justifies the need for enumeration algorithms
that can produce answers in a user-defined order, even if they do so at the
expense of higher delay bounds.

This task, called \emph{ranked enumeration}, has
recently been studied in various contexts. For instance, Carmeli et
al.~\cite{carmeli2023tractable,bringmann2022tight,DBLP:journals/corr/abs-2201-02401} study for which order functions one
can efficiently perform ranked direct access to the answers
of conjunctive queries: here, efficient ranked direct access implies
efficient ranked enumeration. Ranked enumeration has also been studied to
support order-by
operators on factorized databases~\cite{bakibayev2013aggregation}.
Other works have studied ranked enumeration for document
spanners~\cite{doleschal2022weight}, which relate to the evaluation of MSO
queries over words.
Closer to applications, some works have studied the ranked enumeration of
conjunctive query answers, e.g., Deep et
al.~\cite{deep2019ranked,deep2022ranked} or Tziavelis et
al.~\cite{tziavelis2020optimal,tziavelis2022any}.
Variants of in-order enumeration have been also studied on knowledge
compilation circuit classes, for
instance top-$k$, with a pseudo-polynomial time
algorithm~\cite{bourhis2022pseudo}.
Closest to the present work, Bourhis et al.~\cite{bourhis2021ranked} have
studied enumeration on \emph{words} where the ranking function on answers is
expressed in the formalism of MSO cost functions. They show that enumeration can be performed with linear
preprocessing, with a delay between answers which is no longer constant but
logarithmic in the size of the input word. However, their result does not apply in the more
general context of trees.

\subparagraph*{Contributions.} In this paper, we embark on the study of
efficient ranked enumeration algorithms for the answers to MSO queries on trees,
assuming that all free variables are first-order.
We define this task by assigning \emph{scores} 
to
each so-called \emph{singleton assignment} $[x\to d]$ describing that variable $x$ is
assigned tree node~$d$, and combining these values into a \emph{ranking
function} while assuming a
\emph{subset-monotonicity property}~\cite{tziavelis2022any}: intuitively, when
extending two partial assignments in the same manner, then the order between
them does not change.
This setting covers many ranking functions, e.g.,
those defined by order, sum, or a lexicographic order on the variables.
Our main contribution is then to show the following
results on the data complexity of ranked enumeration for MSO queries on trees:

\begin{result}
  \label{res:mso}
  For any fixed MSO query $Q(x_1, \ldots, x_n)$ with free first-order variables,
  given as input a tree $T$ and a
  subset-monotone ranking function $w$ on the partial assignments of
  $x_1, \ldots, x_n$
  to nodes of~$T$, we can enumerate the answers to~$Q$ on~$T$ in nonincreasing
  order of scores
  according to~$w$
  with a preprocessing time 
  of $O(|T|)$
  and a delay of 
  $O(\log (K+1))$,
  where $K$ is the number of answers produced so far.
\end{result}

Note that, as the total number of answers is at most $|T|^n$,
and as $n$ is constant in data complexity,
the delay of
$O(\log (K+1))$ can alternatively be bounded by $O(n \log |T|)$, or $O(\log |T|)$.
This matches the bound
of~\cite{bourhis2021ranked} on words, though their notion of
rank is different.
Further, our bound
shows that the first answers can be produced faster,
e.g., for top-$k$ computation.

Our results for MSO queries on trees are shown in the general framework of
\emph{circuit-based enumeration methods}, introduced
by~\cite{amarilli2017circuit}. In this framework, enumeration results are
achieved by first translating the task to a class of structured circuits from
knowledge compilation, and then proposing an enumeration algorithm that works
directly on the structured class. This makes it possible to re-use enumeration
algorithms across a variety of problems that compile to circuits. In this paper,
as our task consists in enumerating assignments (from first-order variables of an MSO query to tree nodes),
we phrase our results in terms of \emph{multivalued circuits}. These circuits
generalize Boolean circuits by allowing variables to take values in a larger
domain than $\{0, 1\}$: intuitively, the domain will be the set of the tree nodes. We assume that
circuits are \emph{decomposable}, i.e., that no variable has a path to two
different inputs of a $\land$-gate: this yields \emph{multivalued DNNFs}, which
generalize usual DNNFs. We also assume that the circuits are \emph{smooth}:
intuitively, no variable is omitted when combining partial assignments at
an~$\lor$-gate. Multivalued circuits can be smoothed while preserving
decomposability, in quadratic time or faster in some
cases~\cite{shih2019smoothing}. Smooth multivalued DNNF circuits can alternatively be understood
as factorized databases, but we do not impose that they
are \emph{normal}~\cite{olteanu2015size}, i.e., the depth can be arbitrary.

Our enumeration task for MSO on trees thus amounts to the enumeration of
satisfying assignments of smooth multivalued DNNFs, following a ranking function which we
assume to be subset-monotone. However, we are not aware of existing results 
for ranked enumeration on circuits in the knowledge compilation literature.
For this reason, the second contribution of
this paper is to show efficient enumeration algorithms on these
smooth multivalued DNNFs.

We first present an algorithm for this task that runs with no preprocessing and
polynomial delay.
The algorithm can be seen as an instance of the
Lawler-Murty~\cite{lawler1972procedure,murty1968algorithm} procedure. We show:

\begin{result}
  \label{res:dnnf}
  For any constant $n \in \NN$,
  given a smooth multivalued DNNF circuit $C$ with domain~$D$ and with
  $n$ variables, given
  a subset-monotone ranking function $w$, we can enumerate the
  satisfying assignments of~$C$ in nonincreasing order of scores according to~$w$
  with delay $O(|D|\times |C| + \log (K+1))$, where~$K$ is the number of assignments produced so far.
\end{result}

We then show a second algorithm, which allows for a better delay bound at the expense of making
an additional assumption on the circuit; it is with this algorithm that we prove Result~\ref{res:mso}.
The additional assumption is that the circuit is \emph{deterministic}:
intuitively, no partial assignment is captured twice. This corresponds to the
class of \emph{smooth multivalued d-DNNF circuits}. For our task of enumerating
MSO query answers, the determinism
property can intuitively be enforced on circuits when we compute them using 
an deterministic tree automaton to represent the query. We then show:

\begin{result}
  \label{res:ddnnf}
  For any constant $n \in \NN$,
  given a smooth multivalued d-DNNF circuit $C$ with $n$ variables,
  given a subset-monotone ranking function $w$, we can enumerate the
  satisfying assignments of~$C$ in nonincreasing order of scores according to~$w$
  with preprocessing time $O(|C|)$ and delay $O(\log (K+1))$,
  where $K$ is the number of assignments produced so far.
\end{result}

\subparagraph*{Paper structure.}
We give preliminary definitions in Section~\ref{sec:preliminaries}. We first
study in Section~\ref{sec:DNNF} the ranked enumeration problem for smooth
multivalued DNNF circuits (Result~\ref{res:dnnf}). We then move on to a more
efficient algorithm on smooth multivalued d-DNNF circuits
(Result~\ref{res:ddnnf}) in Section~\ref{sec:d-DNNF}. We show how to apply the
second algorithm to ranked enumeration for the answers to MSO queries
(Result~\ref{res:mso}) in Section~\ref{sec:mso}. We conclude in
Section~\ref{sec:conclusion}.

\section{Preliminaries}
\label{sec:preliminaries}
\begin{toappendix}
  \label{apx:prelims}
\end{toappendix}
For~$n\in \NN$ we write~$[n]$ the set~$\{1,\ldots,n\}$.
\subparagraph{Assignments.} For two finite sets $D$ of \emph{values}
and $X$ of \emph{variables}, an \emph{assignment on domain~$D$ and variables
$X$} is a mapping from $X$ to $D$. We write $D^X$ the set of such
assignments. 
We can see assignments as sets of \emph{singleton assignments}, where a
\emph{singleton assignment} is an expression of
the form $[x\to d]$ with $x \in X$ and $d \in D$.

Two assignments $\tau \in D^Y$ and $\sigma \in
D^Z$ are \emph{compatible}, written~$\tau \simeq \sigma$,
if we have $\tau(x)=\sigma(x)$ for every~$x\in Y\cap Z$.
In this case, we denote by $\tau \bowtie
\sigma$ the assignment of $D^{Y \cup Z}$ defined following the natural join,
i.e., for~$y\in Y\setminus Z$ we set $(\tau
\bowtie \sigma)(y) \colonequals \tau(y)$, for~$z\in Z\setminus Y$ we set $(\tau
\bowtie \sigma)(z) \colonequals
\sigma(z)$, and for $x\in Z\cap Y$, we set $(\tau \bowtie
\sigma)(x)$ 
to the common value $\tau(x) = \sigma(x)$.
Two assignments $\tau \in
D^Y$ and $\sigma \in D^Z$ are \emph{disjoint}
if $Y \cap Z = \emptyset$: then 
they are always compatible and $\tau \bowtie \sigma$
corresponds to the relational product, which we write $\tau \times \sigma$.

Given $R \subseteq D^Y$ and $S \subseteq D^Z$,
we define
$R \wedge S = \{\tau \bowtie \sigma \mid \tau \in R,\, \sigma \in S,\, \tau \simeq
\sigma\}$: this is a subset of $D^{Y \cup Z}$. Note how, if the domain is $D = \{0, 1\}$, then this corresponds to
the usual conjunction for Boolean functions, and in general we can see it as a relational join, or a relational
product whenever $Y \cap Z = \emptyset$.
Further, we define $R \vee S = \{\tau
\in D^{Y \cup Z} \mid \tau|_Y \in R \text{ or } \tau|_Z \in S\}$, which is
again a subset of $D^{Y \cup Z}$.
Again observe how, when $D = \{0, 1\}$, this
corresponds to disjunction; and
in general we can see this as relational union
except that assignments over $Y$ and $Z$ are each implicitly
completed in all possible ways to assignments over $Y \cup Z$.

\subparagraph{Multivalued circuits.}
A \emph{multivalued circuit $C$ on domain 
$D$ and variables $X$} is a DAG with labeled vertices which are called
\emph{gates}. The circuit also has a distinguished gate~$r$ called
the \emph{output gate of $C$}. Gates having no incoming edges are called 
\emph{inputs} of $C$. Moreover, we have:
\begin{itemize}
\item Every input of $D$ is labeled with a pair of the form $\lit[x][d]$ with $x \in X$
  and $d \in D$;
\item Every other gate of $D$ is labeled with either $\vee$ (a
  \emph{$\vee$-gate}) or $\wedge$ (a \emph{$\wedge$-gate}).
\end{itemize}
We denote by $|C|$ the number of edges in 
$C$.

Given a gate $v$ of $C$, the \emph{inputs of $v$} are the gates $w$
of $C$ such that there is a directed edge from $w$ to $v$. The
\emph{set of variables below $v$}, denoted by $\var{v}$, is then the set of
variables $x \in X$ such that there is an input $w$ which is labeled by $\lit$ for some
$d\in D$ and which has a directed path to~$v$. Equivalently, if $v$ is an input
labeled by $\lit$ then $\var{v} \colonequals \{x\}$, otherwise $\var{v}
\colonequals \bigcup_{i=1}^k
\var{v_i}$ where $v_1,\dots,v_k$ are the inputs of $v$. We assume that the set
$X$ of variables of the circuit is equal to $\var{r}$ for $r$ the output gate of~$C$:
this can be enforced without loss of generality up to removing useless
variables from~$X$.

For each gate $v$ of $C$, the \emph{set of assignments $\rel v \subseteq D^{\var v}$
of $v$} is defined inductively as follows. If $v$ is an input labeled
by $\lit$, then $\rel v$ contains only the assignment $[x \mapsto d]$.
Otherwise, if $v$ is an internal gate with inputs $v_1, \dots, v_k$ then $\rel
v \colonequals \rel{v_1}~\op \cdots \op~\rel{v_k}$ where $\op \in \{\vee, \wedge\}$ is the
label of $v$. The \emph{set of assignments} $\rel{C}$ of~$C$ is that of its output gate.
Note that, if $D = \{0, 1\}$, then the set of assignments of~$C$ precisely
corresponds to its satisfying valuations when we see~$C$ as a Boolean circuit in
the usual sense. 

We say that a $\wedge$-gate $v$ is \emph{decomposable} if all its inputs are on
disjoint sets of variables; formally, for every pair of
inputs $v_1 \neq v_2$ of $v$, we have $\var{v_1} \cap \var{v_2} = \emptyset$. A
$\vee$-gate $v$ is \emph{smooth} if all its inputs have the same set of
variables (so that implicit completion does not occur); formally, for every pair of inputs $v_1,v_2$ of $v$,
we have $\var{v_1} = \var{v_2}$. A $\vee$-gate $v$ is \emph{deterministic} if
every assignment of~$v$ is computed by only one of its inputs;
formally,
for every pair of inputs $v_1 \neq v_2$ of $v$, if $\tau \in \rel v$ then either
$\tau|_{\var{v_1}} \notin \rel{v_1}$ or $\tau|_{\var{v_2}} \notin \rel{v_2}$.

Let $v$ be an internal gate with inputs $v_1,\dots,v_k$. Observe that if $v$ is
decomposable, then $\rel v = \bigtimes_{i=1}^k \rel{v_i}$. If $v$ is smooth
then $\rel v = \bigcup_{i=1}^k \rel{v_i}$. If moreover $v$ is deterministic,
then $\rel v = \biguplus_{i=1}^k \rel{v_i}$, where $\uplus$ denotes disjoint
union. Accordingly, we denote decomposable $\wedge$-nodes as
\emph{$\times$-nodes}, denote smooth $\lor$-nodes as \emph{$\cup$-nodes}, and
denote smooth deterministic $\lor$-nodes as \emph{$\uplus$-nodes}.

A multivalued circuit is \emph{decomposable} (resp., \emph{smooth},
\emph{deterministic}) if every $\land$-gate is decomposable (resp., every
$\lor$-gate is smooth, every $\lor$-gate is deterministic).
A \emph{multivalued DNNF on domain $D$ and
variables $X$} is then a decomposable multivalued circuit on~$D$ and~$X$.
A \emph{multivalued d-DNNF on domain $D$ and variables $X$} is a
deterministic multivalued DNNF on~$D$ and~$X$.
In all this paper, we only work with circuits that are both decomposable and
smooth, i.e., smooth multivalued DNNFs. Note that smoothness
can be ensured on Boolean circuits in quadratic time~\cite{shih2019smoothing},
and the same can be done on multivalued circuits.

\subparagraph{Ranking functions.}
Our notion of ranking functions will give a score to each assignment, but 
to state their properties we define them on partial
assignments. Formally, a \emph{partial assignment} is a mapping~$\nu:X\to D\cup
\{\bot\}$, where $\bot$ is a fresh symbol representing \emph{undefined}.
We denote by $\parta$ the set of partial assignments on domain $D$ and
variables $X$. 
The
\emph{support} $\supp(\nu)$ of~$\nu$ is the subset of $X$ on which~$\nu$ is
defined. We extend the definitions of compatibility, of~$\bowtie$, and of
disjointness, to partial assignments in the following way. 
  Two partial assignments $\tau \in \overline{D^Y}$ and $\sigma \in
\overline{D^Z}$ are \emph{compatible}, again written~$\tau \simeq \sigma$, when
for every~$x\in Y\cap Z$, if~$\tau(x)\neq \bot$ and~$\sigma(x)\neq \bot$
then~$\tau(x)=\sigma(x)$. In this case, we denote by $\tau \bowtie \sigma$ the
partial assignment of $\overline{D^{Y \cup Z}}$ defined by: for~$y\in Y\setminus Z$ we
have $(\tau \bowtie \sigma)(y) \colonequals \tau(y)$, for~$z\in Z\setminus Y$
we have $(\tau \bowtie \sigma)(z) \colonequals \sigma(z)$, and for $x\in Z\cap
Y$, if~$\tau(x)\neq \bot$ then $(\tau \bowtie \sigma)(x)=\tau(x)$, otherwise
$(\tau \bowtie \sigma)(x)=\sigma(x)$. We call~$\tau$ and~$\sigma$ 
\emph{disjoint}
if~$Y \cap Z = \emptyset$; then again they are always compatible and we write
$\tau \times \sigma$ for $\tau \bowtie \sigma$.

We then consider ranking functions defined on partial
assignments $\parta$, on which we will impose 
\emph{subset-monotonicity}.
Formally, a \emph{$(D,X)$-ranking function} $w$ 
is a function\footnote{As usual, when we write $\RR$, we assume a suitable
representation, e.g., as floating-point numbers.}
$\parta \to \RR$ that gives a score to every partial assignment.
Such a ranking function induces a \emph{weak ordering\footnote{Recall that a weak ordering~$\preceq$ on~$A$ is a total
  preorder on~$A$, i.e., $\preceq$ is transitive and we have either $x \preceq
  y$ or $y \preceq x$ for every $x,y\in A$. In
  particular, it can be the case that two distinct elements $x$ and $y$ are tied, i.e., $x \preceq y$ and $y
  \preceq x$.}~$\preceq$}
on~$\parta$, with~$\mu \preceq \mu'$ defined as~$w(\mu) \leq
w(\mu')$. We always assume
  that ranking functions can be computed efficiently, i.e., with
  running time that only depends on~$X$, not~$D$.
  By a slight notational abuse, we define the score $w(\tau)$ of
  partial assignment~$\tau\in \overline{D^Y}$ with~$Y\subseteq X$
  by seeing $\tau$ as a partial assignment on~$X$ which is implicitly extended
  by assigning $\bot$ to every $z \in X\setminus Y$.
  Following earlier work~\cite{deep2019ranked,tziavelis2022any,deep2022ranked},
  we then restrict our study to ranking functions that are
  \emph{subset-monotone}~\cite{tziavelis2022any}:
  \begin{definition}
    \label{def:subset-monotone}
  A $(D,X)$-ranking function $w: \parta \to \RR$ is \emph{subset-monotone} if
  for every~$Y\subseteq X$ and partial assignments $\tau_1, \tau_2 \in
  \overline{D^{Y}}$ such that $w(\tau_1) \leq w(\tau_2)$, for every partial
  assignment $\sigma \in \overline{D^{X\setminus Y}}$ (so disjoint with
  $\tau_1$ and $\tau_2$), we have $w(\sigma \times \tau_1) \leq w(\sigma \times
  \tau_2)$. 
\end{definition}
  We use in particular the following consequence of subset-monotonicity (see
  Appendix~\ref{apx:prelims}), where we call~$\tau \in \parta$ \emph{maximal}
  (or maximum) for~$w :\parta \to \RR$ when for every~$\tau' \in \parta$ we
  have~$w(\tau') \leq \tau(\tau)$:

\begin{lemmarep}
  \label{lemma:maxdnnf} Let $R\subseteq \overline{D^Y}$ and $S \subseteq \overline{D^Z}$ with $Y \cap Z
  = \emptyset$, and let $w: \overline{D^{Y\cup Z}} \to \RR$ be subset-monotone.
  If $\tau$ is a maximal element of $R$ and $\sigma$ is a maximal element of
  $S$ with respect to $w$, then $\tau \times \sigma$ is a maximal element of $R
  \wedge S$ with respect to $w$.
\end{lemmarep}

\begin{toappendix}
  Lemma~\ref{lemma:maxdnnf} is a direct consequence of the following lemma:
\begin{lemma}
  \label{lem:subsetmonotone} A $(D,X)$-ranking function $w$ is subset-monotone
  if and only if the following holds: for every partial assignments $\tau_1,\tau_2$ and $\sigma_1,
  \sigma_2$,
  if 
  $\tau_i$ is disjoint from $\sigma_i$ for each $i \in \{1, 2\}$ and $w(\tau_1) \leq w(\tau_2)$ and
  $w(\sigma_1) \leq w(\sigma_2)$, then we have $w(\tau_1\times \sigma_1) \leq
  w(\tau_2 \times \sigma_2)$.
\end{lemma}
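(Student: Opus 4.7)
My plan is to prove the biconditional in two directions, where the easier converse falls out by specialization and the main direction proceeds by chaining subset-monotonicity through an intermediate partial assignment. The overall strategy is to go from $(\tau_1, \sigma_1)$ to $(\tau_2, \sigma_2)$ by first replacing $\tau_1$ by $\tau_2$ and then replacing $\sigma_1$ by $\sigma_2$, invoking subset-monotonicity at each step and concluding by transitivity of $\leq$.

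For the direction $(\Leftarrow)$ I will assume the pairwise condition and derive subset-monotonicity by instantiating it with $\sigma_1 := \sigma_2 := \sigma$. The extra hypothesis $w(\sigma_1) \leq w(\sigma_2)$ is then trivially satisfied, the disjointness condition reduces to $Y \cap (X \setminus Y) = \emptyset$, and the pairwise conclusion $w(\tau_1 \times \sigma) \leq w(\tau_2 \times \sigma)$ is exactly the statement of subset-monotonicity.

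For the direction $(\Rightarrow)$ I will chain through the intermediate $\tau_2 \times \sigma_1$. The first step applies subset-monotonicity to the pair $(\tau_1, \tau_2)$ while keeping $\sigma_1$ fixed as the context on the complementary variables, yielding $w(\tau_1 \times \sigma_1) \leq w(\tau_2 \times \sigma_1)$. The second step applies subset-monotonicity to $(\sigma_1, \sigma_2)$ with $\tau_2$ as the fixed context, yielding $w(\tau_2 \times \sigma_1) \leq w(\tau_2 \times \sigma_2)$. Transitivity then closes the chain.

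The main obstacle is domain alignment: subset-monotonicity as stated requires the two partial assignments being compared to live in a common $\overline{D^Y}$ and the context to live in $\overline{D^{X \setminus Y}}$. To apply it twice, I need to view $\tau_1$ and $\tau_2$ as partial assignments on a common domain, and similarly for $\sigma_1$ and $\sigma_2$; this is achieved by extending them by $\bot$ on the missing variables, which preserves their scores under the convention fixed in the preliminaries, and which also preserves the value of the product assignments $\tau_i \times \sigma_i$. Once the domains are aligned this way, the intermediate product $\tau_2 \times \sigma_1$ is well-defined as a disjoint product and both invocations of subset-monotonicity apply directly.
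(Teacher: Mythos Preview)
Your proposal is correct and takes essentially the same approach as the paper: the converse direction by specializing $\sigma_1 = \sigma_2 = \sigma$, and the forward direction by chaining $w(\tau_1 \times \sigma_1) \leq w(\tau_2 \times \sigma_1) \leq w(\tau_2 \times \sigma_2)$ via two applications of subset-monotonicity and transitivity. Your explicit treatment of the domain-alignment issue even goes slightly beyond the paper, which leaves that point implicit.
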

\begin{proof}
  The right to left implication is obvious as we can apply the property with
  $\sigma_1=\sigma_2$ to recover the definition of subset-monotone. For the
  other way around, assume $w$ to be subset-monotone and let
  $\tau_1,\tau_2,\sigma_1,\sigma_2$ be partial assignments as in the statement. By subset-monotonicity, first by applying the
  property with $\sigma=\sigma_1$ and then with $\sigma=\tau_2$, and using
  transitivity, we have
  $w(\tau_1 \times \sigma_1) \leq w(\tau_2 \times \sigma_1) \leq w(\tau_2 \times
  \sigma_2)$. 
\end{proof}

We point out here that the converse of Lemma~\ref{lemma:maxdnnf} is not
necessarily true, i.e., it is not always the case that every maximal assignment~$\tau$ of
a subset-monotone ranking function~$w: \overline{D^{Y\cup Z}} \to \RR$
with~$Y\cap Z=\emptyset$ can be decomposed as $\tau = \tau_1 \times \tau_2$
with~$\tau_1$ some maximal assignment of~$\overline{D^Y}$ and~$\tau_2$ some
maximal assignment of~$\overline{D^Z}$. We show an example of this phenomenon below.

\begin{example}
  \label{expl:subset-monotonicity-is-weird}
  Take~$D=\{a\}$,~$Y=\{y\}$, $Z=\{z\}$ with~$y\neq z$,
  and consider the ranking function~$w: \overline{D^{\{y,z\}}} \to \RR$ defined by
\begin{equation}
  w(\sigma) =
    \begin{cases}
      0 & \text{if } \sigma(y)=\sigma(z)=\bot\\
      50 & \text{if } \sigma(y) = \bot \text{ and } \sigma(z)=a\\
      100 & \text{if } \sigma(y) = a \text{ and } \sigma(z)=\bot\\
      100 & \text{if } \sigma(y) = a \text{ and } \sigma(z)=a
    \end{cases}.       
\end{equation}
  The reader is invited to check that~$w$ is subset-monotone but does not
  satisfy the property mentioned above (take~$\tau = [y \to a, z\to \bot]$).
\end{example}
\end{toappendix}
We give a few examples of subset-monotone ranking functions.
Let~$W:X\times D \to \RR$ be a function assigning scores to
singleton assignments, and define the $(D,X)$-ranking
function~$\mathsf{sum}_W\colon \parta \to \RR$ by $\mathsf{sum}_W(\tau) =
\sum_{x\in X, \tau(x) \neq \bot} W(x,\tau(x))$. Then~$\mathsf{sum}_W$
is subset-monotone. Similarly define $\mathsf{max}_W\colon \parta \to \RR$ by
$\mathsf{max}_W(\tau) = \max_{x\in X, \tau(x) \neq \bot}
W(x,\tau(x))$, 
or $\mathsf{prod}_W$ in a similar manner (with non-negative scores for
singletons); then these are again subset-monotone. In particular, we can use
$\mathsf{sum}_W$ 
to encode lexicographic orderings on~$\parta$.

\subparagraph*{Enumeration and problem statement.}
Our goal in this article is to efficiently enumerate the satisfying assignments
of circuits in nonincreasing order according to a ranking function. We will in
particular apply this for the ranked enumeration of the answers to MSO queries
on trees, as we will explain in Section~\ref{sec:mso}. We call this problem
$\Enum$.
Formally, the input to $\Enum$ consists of a multivalued circuit
$C$ on domain $D$ and variables $X$, and a $(D,X)$-ranking function~$w$ that is subset-monotone.
The output to enumerate
consists of all of~$\rel{C}$, without duplicates, in nonincreasing order of
scores (with ties broken arbitrarily).

Formally, we work in the RAM model on words of
logarithmic size~\cite{aho1974design}, where memory cells can represent
integers of value polynomial in the input length, and on which arithmetic
operations take constant time. We will in particular allocate arrays of
polynomial size in constant time, using lazy initialization~\cite{grandjean2022arithmetic}.
We measure the performance of our algorithms in the framework of
\emph{enumeration algorithms},
where we distinguish two phases.
First, in the \emph{preprocessing phase}, the algorithm reads the input and builds
internal data structures. We measure the running time of this phase as a
function of the input; in general the best possible bound is \emph{linear
preprocessing}, 
e.g.,
preprocessing in $O(|C|)$. Second, in
the \emph{enumeration phase}, the algorithm produces the assignments, one after
the other, without duplicates, and in nonincreasing order of scores; the order of assignments
that are tied according to the ranking function is not specified.
The \emph{delay} is the maximal time that the enumeration phase can take to
produce the next assignment, or to conclude that none are left.
We measure the
delay as a function of the input, as a function of the
produced assignments (which each have size $|X|$), and also as a function of the
number of results that have been produced so far. The best delay is
\emph{output-linear delay}, i.e., $O(|X|)$, which can be achieved for 
(non-ranked) enumeration of MSO queries on
trees~\cite{bagan2006mso,kazana2013enumeration,amarilli2017circuit}.
In our results, we will always fix~$|X|$ to a constant (for technical reasons
explained in the next section), so the corresponding bound would be
\emph{constant delay}, but, like~\cite{bourhis2021ranked}, we will not be able
to achieve it. Also note that the memory usage of the enumeration phase is not
bounded by the delay, but can grow as enumeration progresses.

\subparagraph{Brodal queues.} Similar to~\cite{bourhis2021ranked}, our
algorithms in this paper will use priority queues, in a specific implementation
called a 
\emph{(functional) Brodal queue}~\cite{brodal1996optimal}. Intuitively, Brodal
queues are priority queues which support union operations in $O(1)$, and which
are \emph{purely functional} in the sense that operations return a
queue without destroying the input queue(s). More precisely, a \emph{Brodal
queue} is a data structure which stores a set of priority-data pairs of the form $(\pp: \mathrm{foo}, \dd: \mathrm{bar})$ where~$\mathrm{foo}$ is
a real number and~$\mathrm{bar}$ an arbitrary piece of data, supporting
operations defined below. Brodal queues are
\emph{purely functional and persistent}, i.e., for any operation applied to
some input Brodal queues, we obtain as output a new Brodal queue $Q'$, such
that the input queues can still be used. Note that
the structures of $Q'$ and of the input Brodal queues may be sharing
locations in memory; this is in fact necessary, e.g., to guarantee
constant-time bounds. However, this is done transparently, and both $Q'$ and the
input Brodal queues can be used afterwards\footnote{This is similar to how persistent
  linked lists can be modified by removing the head element or concatenating
  with a new head element. Such operations can run in constant time and return
  the modified version of the list without invalidating the original list; with
both lists sharing some memory locations in a transparent fashion.}. Brodal
queues support the following:
\begin{itemize}
    \item \emph{Initialize}, in time $O(1)$, which produces an empty queue;
    \item \emph{Push}, in time $O(1)$, which adds to~$Q$ a priority-data pair;
    \item \emph{Find-Max}, in time $O(1)$, which either indicates that
      $Q$ is empty or otherwise returns some pair $(\pp: \mathrm{foo}, \dd: \mathrm{bar})$ with
      $\mathrm{foo}$ being maximal among the priority-data pairs stored in $Q$
      (ties are broken arbitrarily);
    \item \emph{Pop-Max}, in time $O(\log(|Q|))$, which either indicates that
      $Q$ is empty or returns two values: first the pair~$p$ returned by
      Find-Max, second a queue
      storing all the pairs of~$Q$ except~$p$;
    \item \emph{Union}, in time $O(1)$, which takes as input a second Brodal
      queue $Q'$ and returns a queue over the elements of $Q$ and $Q'$. 
\end{itemize}

\section{Ranked Enumeration for Smooth Multivalued DNNFs}
\label{sec:DNNF}
\begin{toappendix}
  \label{apx:DNNF}
\end{toappendix}
In this section, we start the presentation of our technical results by giving
our
algorithm to solve the ranked enumeration
problem for DNNFs under subset-monotone orders. 
This is Result~\ref{res:dnnf} from the introduction, which we restate below:

\begin{theorem}
  \label{thm:dnnf}
  For any constant~$n\in \NN$, we can solve the $\Enum$ problem on an input smooth multivalued DNNF
  circuit $C$ on domain~$D$ and variables $X$ with $|X| = n$ 
  and a subset-monotone $(D,X)$-ranking function
  with no preprocessing and with delay 
  $O(|D|\times |C| + \log (K+1))$,
  where~$K$ is the number of assignments produced so far.
\end{theorem}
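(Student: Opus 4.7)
The plan is to apply a variant of the Lawler--Murty procedure whose core subroutine is a bottom-up ``best extension'' computation on the DNNF. Fix an arbitrary linear order on~$X$, say $x_1, \dots, x_n$. I define $\mathrm{BestExt}(C, \sigma)$ to return the assignment of $\rel C$ extending a partial assignment $\sigma$ that maximises $w$ (or to report that no such extension exists). The computation is a bottom-up traversal that stores at each gate $v$ the best extension of $\sigma$ within $\rel v$: at an input $\lit[x][d]$ the extension exists iff $x \notin \supp(\sigma)$ or $\sigma(x)=d$; at a $\cup$-gate, smoothness implies that all children share the same variable set, so I keep the child of maximum score; at a $\times$-gate, decomposability lets me combine the children's best extensions via~$\bowtie$, and Lemma~\ref{lemma:maxdnnf}, applied inductively along the children, shows this combination is indeed maximal. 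Since each partial-assignment score is computed in time depending only on~$n$, i.e.\ $O(1)$ in data complexity, $\mathrm{BestExt}$ runs in time~$O(|C|)$.

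The enumeration loop maintains a Brodal queue~$Q$ storing pairs $(\sigma, \tau^{\star})$ with priority $w(\tau^{\star})$, where $\tau^{\star} = \mathrm{BestExt}(C, \sigma)$. Initially, $Q$ contains $(\emptyset, \mathrm{BestExt}(C, \emptyset))$ if that call is feasible. At each iteration I pop the max $(\sigma, \tau)$, output~$\tau$, and then partition the remaining extensions of~$\sigma$ as follows. Let $x_{i_1} < \dots < x_{i_k}$ enumerate $X \setminus \supp(\sigma)$. For every $j \in [k]$ and every $d \in D \setminus \{\tau(x_{i_j})\}$ I form the partial assignment $\sigma'$ extending $\sigma$ by $x_{i_\ell} := \tau(x_{i_\ell})$ for $\ell<j$ and $x_{i_j} := d$. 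For each such $\sigma'$ I call $\mathrm{BestExt}(C, \sigma')$ and, if feasible, push the pair $(\sigma', \mathrm{BestExt}(C, \sigma'))$ into~$Q$. These $\sigma'$'s partition $\{\tau' \in \rel C \mid \tau' \text{ extends } \sigma,\ \tau' \neq \tau\}$ according to the smallest index~$j$ for which~$\tau'(x_{i_j})$ disagrees with~$\tau(x_{i_j})$.

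Correctness will follow from two observations. First, iterating the branching shows that every element of $\rel C$ is produced exactly once. Second, because $Q$'s contents always correspond to the block-maxima of a partition of the assignments not yet output, the popped element has the largest score among those still to be produced, so the enumeration is in nonincreasing order. For the delay, each iteration performs $O(n(|D|-1)) = O(|D|)$ calls to $\mathrm{BestExt}$ at cost $O(|C|)$ each, $O(|D|)$ constant-time Brodal pushes, and one Brodal pop of cost $O(\log|Q|)$. After $K$ outputs the queue has received $O(K|D|)$ pushes, so $\log|Q| = O(\log(K+1) + \log|D|)$, and the $\log|D|$ term is absorbed into $|D|\cdot|C|$; this yields the claimed delay~$O(|D|\cdot|C| + \log(K+1))$. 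No preprocessing is needed since the very first $\mathrm{BestExt}$ call fits within the delay budget $O(|D|\cdot|C|)$ for $K=0$.

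The main difficulty I anticipate is the correctness argument, on two fronts: first, rigorously justifying that the Lawler--Murty branching genuinely produces pairwise-disjoint blocks whose union is the set of remaining extensions of~$\sigma$ (so that the enumeration is neither lossy nor repetitive), and second, verifying that the inductive bottom-up application of Lemma~\ref{lemma:maxdnnf} really does recover a global maximum of~$\rel C$ constrained to extend~$\sigma'$, despite subset-monotonicity providing only a one-directional guarantee (combining two maxima produces a maximum of the join, but the converse can fail, as the appendix already illustrates on a small example).
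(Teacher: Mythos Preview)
Your proposal is correct and follows essentially the same Lawler--Murty scheme as the paper. The only cosmetic difference is that you store the pair $(\sigma,\tau^\star)$ with $\tau^\star=\mathrm{BestExt}(C,\sigma)$ precomputed at push time and output it directly on pop, whereas the paper stores only the prefix assignment with priority $w_C(\sigma)$ and reconstructs the full extension greedily (one variable at a time, trying all $|D|$ values) at pop time; both variants issue $O(n\,|D|)$ best-extension calls per output and lead to the same invariants and delay bound.
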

Note how the number $n$ of variables is assumed to be constant in the result
statement. This is for a technical reason: we will need to store partial
assignments in memory, but in the RAM model we can only index polynomially many
memory locations~\cite[page 3]{grandjean2022arithmetic}, so we must ensure that the
total number of assignments is polynomial. The circuit itself and the domain
can however be arbitrarily large, following the application to
MSO queries over trees studied in
Section~\ref{sec:mso}: the variables of the circuit will be the variables of
the MSO query (which is fixed because we will work in data complexity), and
the size of the circuit and that of the domain will be linear in the size of
the tree (which represents the data).

Our algorithm can be seen as an instance of the
Lawler-Murty~\cite{lawler1972procedure,murty1968algorithm} procedure, that has
been previously used to enumerate paths in DAGs in decreasing order of
weight in~\cite{tziavelis2022any}. Interestingly, the result does not require
that the input circuit is deterministic. However, it is less efficient than
the method presented in Section~\ref{sec:d-DNNF} where determinism is
exploited.

We prove Theorem~\ref{thm:dnnf} in the rest of this section.
Let us fix a smooth multivalued DNNF~$C$ on domain $D$ and variables $X$,
and a subset-monotone ranking function $w \colon \parta \rightarrow \RR$.  For
a partial assignment $\tau$, we denote by $w_C(\tau) = \max \{ w(\tau \times
\sigma) \mid \sigma \in D^{X \setminus \supp(\tau)} \text{ and } \tau \times
\sigma \in \rel{C}\}$ the score of the maximal
completion of $\tau$ to a satisfying assignment of $C$ if it exists and $w_C(\tau) = \bot$
if no such completion exists. Our algorithm relies on the following
folklore observation:

\begin{lemmarep}
  \label{lem:getwc} Given a partial assignment $\tau$, one can compute $w_C(\tau)$ in time $O(|C|)$.
\end{lemmarep}
\begin{proofsketch}
  We condition~$C$ on~$\tau$ in linear time, obtaining~$C'$, and then compute
  in a bottom-up manner for every gate~$C'$ some assignment of~$\rel{g}$ of
  maximal score, relying on smoothness and decomposability. See
  Appendix~\ref{apx:DNNF} for more details.
\end{proofsketch}
\begin{proof}
  Let $X$ be the variables of $C$.

  It is enough to show that we can compute, given a smooth multivalued DNNF~$C'$ and
  monotone ranking function $w'$, some~$\sigma' \in \rel{C'}$ that maximizes
  $w'(\sigma')$, in $O(|C'|)$. Indeed, if this is the case we can first compute
  the conditioning\footnote{See~\cite[Definition 5.4]{darwiche2002knowledge} for the
  definition of conditioning on Boolean circuits, which easily adapts
to multivalued circuits.} $C'$ of $C$ on $\tau$ in time $O(|C'|)$: specifically, $C'$ is a multivalued circuit on domain $D$ and variables $X\setminus \supp(\tau)$ such that, for $\sigma' \in D^{X\setminus \supp(\tau)}$ we have that $\sigma' \in \rel{C'}$ iff $\tau \times \sigma' \in \rel{C}$.
Then, letting~$w'$ be the ranking function on~$\overline{D^{X \setminus \supp(\tau)}}$ defined by
$w'(\sigma') \colonequals w(\sigma' \times \tau)$ (which is subset-monotone),
find one such $\sigma' \in \rel{C'}$ in time~$O(|C|)$, and then return~$w(\sigma'
\times \tau)$. This is correct thanks to subset-monotonicity of~$w$, more precisely, by Lemma~\ref{lemma:maxdnnf}.

  Now the algorithm to do this proceeds by bottom-up induction as follows: for each gate
  $v$ of $C'$, we compute $\sigma_v \in \rel v$ such that $w'(\sigma_v) = \max\{
  w'(\sigma) \mid \sigma \in \rel v \}$. If $v$ is an input then $\rel v$ is a
  singleton assignment, and we let $\sigma_v$ be this assignment. Now, if $v$
  is a $\times$-gate with inputs $v_1, \dots, v_k$, we let $\sigma_v=
  \sigma_{v_1} \times \dots \times \sigma_{v_k}$. By Lemma~\ref{lemma:maxdnnf}, $\sigma_v$ is maximal for
  $\rel{v}$ if each $\sigma_{v_i}$ is maximal for $\rel{v_i}$ which is the case
  by induction. Finally, if $v$ is a $\cup$-gate with input
  $v_1,\dots,v_k$, we define $\sigma_v = \argmax_{i=1}^k w'(\sigma_{v_i})$,
  which is clearly maximal in $\rel v = \bigcup_{i=1}^k \rel{v_i}$ if
  $\sigma_{v_i}$ is maximal in $\rel{v_i}$ for each $i$ because~$v$ is smooth,
  which is the case by induction.
\end{proof}

With this in place, we are ready to describe the algorithm. Notice that our
definition of multivalued circuits implies that $\rel{C}$ can never be empty,
because all gates except input gates have inputs, and the circuit is
decomposable.
We fix an arbitrary order on $X =
\{x_1,\dots, x_n\}$ and, for~$i\in \{1,\ldots,n+1\}$, we denote by $X_{< i}$ the
set $\{x_1,\dots,x_{i-1}\}$ (which is empty for~$i=1$). A partial assignment
$\tau \in \overline{D^{X}}$ is called a \emph{prefix assignment} if
$\supp(\tau) = X_{< i}$ for some $i \in \{1, \ldots, n+1\}$.

\begin{algorithm}[t]
\SetAlgoLined
\KwData{Smooth multivalued DNNF $C$ with $n$ variables, subset-monotone
ranking function $w$.}
\KwResult{Enumeration of the satisfying assignments of~$C$ in nonincreasing order of scores by~$w$.}
  $Q \gets \text{empty priority queue}$\;
  Push the empty assignment $[]$ into $Q$ with priority~$w_C([])$\;
\While{$Q$ is not empty}{
    Pop into $\gamma$ the assignment with maximum $w_C$-score from $Q$\;
    \For{$j \gets |\supp(\gamma)|+1$ to $n$}{
        \ForEach{$d \in D$}{
            Construct $\alpha_d = \gamma \times \langle x_j : d \rangle$\;
            Compute $w_C(\alpha_d)$ using Lemma~\ref{lem:getwc}\;
        }
        $\gamma \gets \alpha_{d_0}$ such that $w_C(\alpha_{d_0})$ is not $\bot$ and is
        maximal\;
        Push into $Q$ all $\alpha_{d'}$ for $d' \neq d_0$ where
        $w_C(\alpha_{d'}) \neq \bot$, with priority $w_C(\alpha_{d'})$\;
    }
  Output $\gamma$\;
}
  \caption{Algorithm for Theorem~\ref{thm:dnnf}}
  \label{alg:dnnf}
\end{algorithm}

The enumeration algorithm is then illustrated as Algorithm~\ref{alg:dnnf}, which
we paraphrase in text below.
The algorithm uses a variable~$\gamma$ holding a prefix assignment
and a priority queue~$Q$ containing prefix assignments. The priorities in the
queue are 
the~$w_C$-score, i.e., the priority of each prefix assignment is the score
returned by~$w_C$ on this assignment.
We initialize~$Q$ to contain only the empty partial assignment (i.e., the assignment that maps every variable to~$\bot$, denoted $[]$
in Algorithm~\ref{alg:dnnf}): note that the $w_C$-score of $[]$ is not
$\bot$ because 
$\rel{C}\neq \emptyset$.  
  We then do the
  following until the queue is empty. We pop (i.e., call \emph{Pop-Max}) from
  the queue a prefix assignment (of maximal~$w_C$-score) that we assign
  to~$\gamma$; we will inductively see that $\gamma$ is a prefix assignment of~$D^{<i}$ for some~$i\in
  \{1,\ldots,n+1\}$ and that its $w_C$-score is not $\bot$. 
  We then do the following for~$j\colonequals i$ to~$n$ (i.e., 
  potentially zero times, in case~$i=n+1$ already). For
  every possible choice of domain element~$d\in D$,
  we let~$\alpha_d$ be the prefix assignment that extends $\gamma$ by
  assigning $x_i$ to~$d$, and we compute the value~$w_C(\alpha_d)$ using
  Lemma~\ref{lem:getwc}. Among these values, the definition of $w_C$ ensures
  that one has a $w_C$-score which is not $\bot$, because this is true
  of~$\gamma$.
  We thus pick a value~$d_0\in D$ such that
  $w_C(\alpha_{d_0})$ is maximal (in particular non-$\bot$). We set~$\gamma$ to~$\alpha_{d_0}$, and we push
  into~$Q$ all other prefix assignments~$\alpha_{d'}$ for~$d' \neq d_0$ for
  which we have~$w_C(\alpha_{d'}) \neq \bot$. 
  Once we have run this for all values of~$j$, we
  have~$i=n+1$, hence~$\gamma$ is a total assignment, and we output it.
  We then continue processing the 
  remaining contents of the queue.

\subparagraph{Correctness of the algorithm.} We show in Appendix~\ref{apx:DNNF} that the
following invariants hold at the beginning and end of every while loop iteration:

\begin{enumerate}
\item\label{it:compatibility} For every~$\tau\in Q$, no satisfying assignment of $C$
  compatible with $\tau$ has been outputted so far;
\item \label{it:incompatibility} For every $\tau, \tau' \in Q$,
  if $\tau \neq \tau'$ then $\tau \not \simeq \tau'$;
\item\label{it:tobeoutputed} For every $\sigma \in \rel C$ that has not yet
  been outputted by the algorithm, there exists some $i\in \{1,\ldots,n+1\}$ such that
    $\sigma|_{X_{<i}} \in Q$ (in fact, the previous point then implies there is at
    most one such~$i$);
\item\label{it:size} The number of elements in~$Q$ is at most~$n \times |D|
  \times (K+1)$, where~$K$ is
  the number of assignments produced so far.
\end{enumerate}
We explain next why they imply
correctness.

\begin{claim}
  \label{alg:dnnfcorrect}
  Algorithm~\ref{alg:dnnf} terminates, enumerates $\rel{C}$ without duplicates
  and in nonincreasing order, and runs with delay $O(|D| \times |C| + \log
  (K+1))$
  with~$K$ the number of assignments produced so far.
\end{claim}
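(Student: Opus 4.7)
The plan is to first establish the four numbered invariants by induction on the iteration count, and then derive each part of the claim from them. Initialization is immediate: $Q$ contains only the empty prefix, nothing has been output, and the empty prefix restricts (vacuously) to every satisfying assignment. For the inductive step, one iteration removes the popped prefix~$\gamma$ from~$Q$, forms the set $S = \{\alpha_d : d \in D,\, w_C(\alpha_d) \neq \bot\}$ of one-step extensions of~$\gamma$ having a satisfying completion, greedily walks the inner for-loop down to a total assignment (applying the same split at each level), outputs it, and pushes all unused elements encountered along the way into~$Q$. The key observation is that $w_C(\alpha_d) \neq \bot$ exactly when~$\alpha_d$ has some satisfying completion, and every satisfying completion of~$\gamma$ restricts to exactly one such~$\alpha_d$; consequently, the pushed elements together with the chosen extension partition the satisfying completions of~$\gamma$. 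This preserves invariants \ref{it:compatibility}--\ref{it:tobeoutputed}. Invariant~\ref{it:size} then follows by counting, since after $K+1$ iterations at most $K+1$ pops and at most $n|D|(K+1)$ pushes have been performed.

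Next I would derive the three qualitative parts of the claim. For \emph{no duplicates}, each output is compatible with the prefix popped at the start of its iteration, so by invariant~\ref{it:compatibility} it differs from all previously produced outputs. For \emph{completeness and termination}, invariant~\ref{it:tobeoutputed} guarantees that when~$Q$ empties all satisfying assignments have been enumerated, and since $|\rel{C}| \leq |D|^n$ is finite and each iteration outputs a fresh assignment, the algorithm must reach that state. For \emph{nonincreasing order}, suppose we pop~$\gamma$ with priority $s = w_C(\gamma)$; the definition of~$w_C$ yields $w_C(\gamma) = \max_{d \in D} w_C(\alpha_d)$, so the chosen~$\alpha_{d_0}$ again has $w_C$-score $s$, and iterating the inner loop shows that the final output has score exactly~$s$. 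All elements pushed during this or earlier iterations have $w_C$-score at most~$s$, and all elements that were already in~$Q$ before this pop also had score at most~$s$ by maximality of the pop, so the next Pop-Max returns priority at most~$s$.

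Finally I would bound the delay. The inner for-loop runs at most~$n$ times, each invocation computing $w_C$ for $|D|$ partial assignments via Lemma~\ref{lem:getwc} in time $O(|C|)$, for a total of $O(n|D||C|) = O(|D||C|)$ since $n$ is constant. The iteration also performs $O(n|D|) = O(|D|)$ constant-time Brodal pushes and one Pop-Max on a queue of size at most $n|D|(K+1)$ by invariant~\ref{it:size}, which costs $O(\log(n|D|(K+1))) = O(\log|D| + \log(K+1))$; since $\log|D| \leq |D| \leq |D||C|$, this term is absorbed. The main subtlety throughout is verifying invariant~\ref{it:tobeoutputed}, i.e., that the split of~$\gamma$ into its one-step extensions loses no satisfying completion — which is precisely the partitioning observation above and depends on never discarding an $\alpha_d$ with $w_C(\alpha_d) \neq \bot$.
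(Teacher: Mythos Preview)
Your proposal is correct and follows essentially the same approach as the paper: establish the four invariants by induction on iterations, then derive termination, absence of duplicates, completeness, nonincreasing order, and the delay bound from them. Your arguments for termination (finiteness of $\rel{C}$ plus freshness of each output) and for nonincreasing order (showing directly that the output score equals the popped priority and that all remaining priorities are bounded by it) are slight variants of the paper's (which instead argues that a popped prefix can never be re-pushed, and uses invariant~\ref{it:tobeoutputed} in a contradiction argument for the order), but these are minor presentational differences within the same overall strategy.
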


\begin{proof}
  We first show that the algorithm terminates.
Indeed, notice that we pop a prefix assignment from the queue at the beginning of every
while loop iteration. Let us show that, once a prefix assignment~$\tau$ has been
popped
from~$Q$, it cannot be pushed again into~$Q$ for the rest of the algorithm's
execution. 
Indeed, observe that once we pop~$\tau$ from~$Q$, we will first
push to~$Q$ assignments that are strict extensions of~$\tau$ (hence different
from~$\tau$), and then output
a satisfying assignment $\tau'$ of~$C$ that is compatible with~$\tau$, after which the current 
iteration of the while loop ends. Now, by
invariant~(\ref{it:compatibility}), no partial
assignment compatible with~$\tau'$ can ever be added
to~$Q$, and in particular it is the case that $\tau$ cannot ever be added to~$Q$. Thus the queue
becomes empty and the algorithm terminates.

Since the queue eventually becomes empty, by invariant (\ref{it:tobeoutputed}),
the algorithm outputs at least all of~$\rel{C}$. The fact that there are no
duplicates follows from invariant (\ref{it:compatibility}), using a similar
reasoning to how we proved termination. Furthermore, it is clear that only
assignments of~$\rel{C}$ are ever outputted. Therefore the algorithm indeed
enumerates exactly all of $\rel{C}$ with no duplicates.

To check that assignments are enumerated in nonincreasing order, consider an
iteration of the while loop where we
output~$\tau \in \rel{C}$. Let~$\sigma \in \rel{C}$ be an assignment that
has not yet been outputted, and assume by contradiction that~$w(\tau) <
w(\sigma)$. Consider the prefix assignment~$\gamma$ that was popped from the
queue~$Q$ at the beginning of that iteration; clearly by construction we
have~$w_C(\gamma) = w(\tau)$. But by invariant (\ref{it:tobeoutputed}), there
exists a prefix assignment~$\gamma'$ in~$Q$ of which~$\sigma$ is a completion,
hence for this~$\gamma'$ we have~$w_C(\gamma') \geq w(\sigma)$ by definition
of~$w_C$, and this is strictly bigger than $w(\gamma)$, contradicting the fact
that~$\gamma$ had maximal priority.

  Last, we check that the delay between any two consecutive outputs is indeed
  $O(|D|\times |C| + \log (K+1))$. The $O(|D|\times |C|)$ term corresponds to the
  at most~$n \times |D|$ applications of Lemma~\ref{lem:getwc} during a
  for loop until we produce the next satisfying assignment (remember that~$n$ is
  constant so it is not reflected in the delay). The~$O(\log (K+1))$ term corresponds to
  the unique pop operation performed on the priority queue during a while loop
  iteration. Indeed, by invariant~(\ref{it:size}) the queue contains less than~$n
  \times |D| \times (K+1)$ prefix assignments and the complexity of a pop operation is
  logarithmic in this. Since~$n$ is constant we obtain~$O(\log |D| + \log (K+1))$, and
  the~$O(\log |D|)$ gets absorbed in the $O(|D|\times |C|)$ term.
\end{proof}
Thus, up to showing that
the invariants hold (see Appendix~\ref{apx:DNNF}), we have
concluded the proof of Theorem~\ref{thm:dnnf}.

\begin{toappendix}
\begin{claim}
  \label{clm:invar}
  Invariants (\ref{it:compatibility}--\ref{it:size}) hold.
\end{claim}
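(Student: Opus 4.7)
The plan is to prove the four invariants simultaneously by induction on the number of completed iterations of the while loop. For the base case, $Q = \{[]\}$ and nothing has been output, so (1) and (2) are vacuous, (3) holds because every $\sigma \in \rel{C}$ satisfies $\sigma|_{X_{<1}} = []$, and (4) holds since $|Q| = 1 \leq n|D|$.

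For the inductive step, I would fix notation for an iteration that pops $\gamma$ from~$Q$ and outputs a total assignment $\sigma^\star$. Let $i \colonequals |\supp(\gamma)|+1$ and $\gamma_i \colonequals \gamma$, and for each $j \in \{i,\ldots,n\}$ let $d_0^j$ be the value chosen at step~$j$, $\gamma_{j+1} \colonequals \gamma_j \times \lit[x_j][d_0^j]$, and $\alpha^j_{d'} \colonequals \gamma_j \times \lit[x_j][d']$ for each $d' \in D$; the assignments newly pushed are those $\alpha^j_{d'}$ with $d' \neq d_0^j$ and $w_C(\alpha^j_{d'}) \neq \bot$, and $\sigma^\star = \gamma_{n+1}$.

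I would verify invariant~(2) first. Two newly pushed $\alpha^j_{d'}$ and $\alpha^{j'}_{d''}$ with $j < j'$ disagree on $x_j$ (value $d' \neq d_0^j$ versus value $d_0^j$); two pushed at the same step disagree on $x_j$ by construction; and each newly pushed $\alpha^j_{d'}$ extends $\gamma_i$, so any compatibility with a surviving $\tau \in Q$ would yield $\gamma_i \simeq \tau$, contradicting the inductive~(2). Invariant~(1) then follows by a similar split: if $\sigma^\star$ were compatible with some surviving old $\tau \in Q$, agreement with both $\gamma_i$ and $\tau$ would give $\gamma_i \simeq \tau$ (again contradicting~(2)), while compatibility with a new $\alpha^j_{d'}$ is immediately ruled out by $\alpha^j_{d'}(x_j) = d' \neq d_0^j = \sigma^\star(x_j)$; and any previously output $\sigma'$ compatible with a new $\alpha^j_{d'}$ would be compatible with $\gamma_i$, contradicting the inductive~(1) applied to $\gamma_i$. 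For invariant~(4), the pop removes one element and the for loop pushes at most $n(|D|-1)$ new ones, so $|Q|_{\text{new}} \leq n|D|(K+1) - 1 + n(|D|-1) \leq n|D|(K+2) = n|D|(K_{\text{new}}+1)$.

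The main obstacle is invariant~(3). Given $\sigma \in \rel{C}$ not yet output after the iteration, the inductive~(3) yields a prefix $\sigma|_{X_{<i'}} \in Q$ before the pop. If this element survived the pop we are done; otherwise it must coincide with the popped $\gamma_i$, so $\sigma$ extends $\gamma_i$. Since $\sigma \neq \sigma^\star$, there is a smallest $j \in \{i,\ldots,n\}$ with $\sigma(x_j) \neq d_0^j$, and by minimality $\sigma$ extends $\gamma_j$, hence extends $\alpha^j_{\sigma(x_j)} = \sigma|_{X_{<j+1}}$. The key point is that $\alpha^j_{\sigma(x_j)}$ is actually pushed into $Q$: since $\sigma$ itself is a satisfying completion of $\alpha^j_{\sigma(x_j)}$, we have $w_C(\alpha^j_{\sigma(x_j)}) \neq \bot$, so it does enter $Q$ during this iteration, providing the required prefix of $\sigma$.
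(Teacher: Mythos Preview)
Your proposal is correct and follows essentially the same inductive strategy as the paper's own proof: establish the four invariants before the first iteration and show they are preserved across one while-loop iteration via the same case split between surviving old prefix assignments and the newly pushed $\alpha^j_{d'}$. Your write-up is in fact slightly more explicit than the paper's in two places---the arithmetic for invariant~(4), and for invariant~(3) the observation that $\alpha^j_{\sigma(x_j)}$ is actually pushed because $\sigma$ itself witnesses $w_C(\alpha^j_{\sigma(x_j)}) \neq \bot$---both of which the paper leaves to the reader.
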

\begin{proof}
  These invariants
clearly hold at the beginning of the first while loop iteration because then
$Q$ contains only~the empty assignment.

Assuming that the invariants hold at the beginning of some while
loop iteration when~$Q$ is not empty, let us show they still hold at the end of
that iteration. Write~$Q$ the queue at the beginning of that
  iteration,~$\alpha^{i-1} \in D^{<i}$ the prefix assignment that is popped from the
queue and initially assigned to~$\gamma$ and, for~$j\in \{i,\ldots,n\}$, write $\alpha^j$ the
  value of~$\gamma$ after the for loop for this value of~$j$, with~$\alpha^j =
\alpha^{j-1}_{d_j}$ ($= \alpha^{j-1} \times \lit[x_j][d_j]$). Define also~$D'_i
= \{d' \in D \mid d' \neq d_i \text{ and } w_C(\alpha^{i-1}_{d'}) \neq \bot\}$.
Letting~$Q'$ be the queue after this while loop iteration, observe that 
  $Q' = (Q \setminus \{\alpha^{i-1}\}) \cup \bigcup_{i=1}^n \{\alpha^{i-1}_{d'}
\mid d' \in D_i\}$. We check that the invariants hold for~$Q'$.

  We first check invariant (\ref{it:compatibility}): let~$\tau \in Q'$. There are
  two cases. First, if~$\tau
  \in Q \setminus \{\alpha^{i-1}\}$, then, by induction hypothesis using
  invariant~(\ref{it:compatibility}), no satisfying assignment
  of $C$ compatible with~$\tau$ had been outputted
  before~$\alpha^n$. Now, $\tau$ is an extension
  of~$\alpha^{i-1}$ and~$\alpha^{i-1}$ is not
compatible with~$\tau$ by induction hypothesis of invariant
(\ref{it:incompatibility}) on~$Q$. Therefore~$\alpha^n$ cannot be compatible
  with~$\tau$ either. Second, if~$\tau$ is one of the
$\alpha^{i-1}_{d'}$ then by construction it is not compatible with the
  assignment $\alpha^n$ that we
output at this iteration.
  Moreover since~$\tau$ is an extension of~$\alpha^{i-1}$, by induction
  hypothesis with invariant
(\ref{it:incompatibility}) on~$Q$ again we cannot have outputted an
assignment compatible with~$\tau$ before this iteration. Thus, in both cases, (\ref{it:compatibility}) holds for~$Q'$.

Invariant (\ref{it:incompatibility}) is easy to check, again because
  every~$\tau$ of the form $\alpha^{i-1}_{d'}$ is an extension of~$\alpha^{i-1}$
  and~$\alpha^{i-1}$ is incompatible with every~$\tau\in Q \setminus
  \{\alpha^{i-1}\}$ by
induction hypothesis.

  To check invariant (\ref{it:tobeoutputed}), consider
  $\sigma \in \rel{C}$. By induction hypothesis with the same invariant, there
  is some $i \in \{1, \ldots, n+1\}$ such that $\sigma|_{X_{<i}} \in Q$. 
  So if $\sigma|_{X_{<i}}$ is not~$\alpha^{i-1}$ then the claim is immediate.
  Otherwise, we can see that we either produce~$\sigma$ or add to~$Q'$ some
  assignment that extends $\alpha^{i-1}$ and is compatible with~$\sigma$.

  Finally, invariant (\ref{it:size}) is also trivial to show by applying
  induction hypothesis to~$Q$ and observing that $\bigcup_{i=1}^n
  \{\alpha^{i-1}_{d'} \mid d' \in D_i\}$ has at most~$n \times |D|$ elements.
\end{proof}
  \end{toappendix}

\section{Ranked Enumeration for Smooth Multivalued d-DNNFs}
\label{sec:d-DNNF}
Having shown our polynomial-delay ranked enumeration algorithm for DNNF
circuits, we move on in this section to our main technical contribution.
Specifically, we present an algorithm for smooth multivalued DNNF circuits that
are further assumed to be \emph{deterministic}, 
but which achieves linear-time preprocessing and delay $O(\log (K+1))$, where 
$K$ denotes the number of satisfying assignments produced so far. This proves
Result~\ref{res:ddnnf}, which we restate below:

\begin{theorem}
  \label{thm:ddnnf}
  For any constant~$n\in \NN$, we can solve the $\Enum$ problem on an
  input smooth
  multivalued d-DNNF circuit~$C$ with~$n$ variables 
  and a subset-monotone ranking function, with preprocessing $O(|C|)$ and
  delay \mbox{$O(\log (K+1))$}, where $K$ is
  the number of
  assignments produced so far.
\end{theorem}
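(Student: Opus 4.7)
The plan is to build on the algorithm of Theorem~\ref{thm:dnnf}, using determinism to bypass the per-expansion recomputation of $w_C$ and using persistent priority queues to avoid a per-step cascade through the circuit. Under determinism, each assignment of $\rel{C}$ corresponds to a unique \emph{proof tree}, namely a choice of one input at each $\uplus$-gate reached, together with the Cartesian structure of each $\times$-gate; enumerating $\rel{C}$ in nonincreasing order of score therefore reduces to enumerating proof trees in nonincreasing order, and information cached per gate can be reused across many outputs rather than recomputed per partial assignment.

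For preprocessing, I would compute bottom-up, for every gate $v$, a maximum-score assignment $\tau_v^\star \in \rel{v}$ together with its score $M(v)$. By Lemma~\ref{lemma:maxdnnf}, this is straightforward: at a $\uplus$-gate $v$ take the best of the $\tau_{v_i}^\star$, and at a $\times$-gate $v$ take their product. Since the ranking function is computable in time depending only on the (constant) number $n$ of variables, and since assignments have size $O(n)$, each gate requires time linear in its fan-in, for a total of $O(|C|)$. In the same pass, for every gate $v$ I would build an auxiliary Brodal queue representing its ``deviations from the best choice'': at a $\uplus$-gate, the non-best inputs keyed by their max-score decrement, and at a $\times$-gate, the initial grid frontier that permits enumerating the Cartesian product of input streams by successive single-component advances.

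For enumeration, the plan is to maintain a single global Brodal queue $Q$ of candidate proof trees, each tagged with its score and with the data required to materialize its assignment on demand in $O(n) = O(1)$. Initially $Q$ contains only $\tau_r^\star$ for the root $r$. At each enumeration step we pop the max element of $Q$, output its assignment, and push a small set of ``successor proof trees'' chosen so that, by subset-monotonicity, the actual next output is guaranteed to be present in $Q$. To prevent a $\Theta(\mathrm{depth})$ cascade per pop, each proof tree stored in $Q$ carries a persistent Brodal queue summarizing, via structural sharing, all of its single-point deviations; unioning this queue into $Q$ in $O(1)$ via Brodal's union operation exposes the relevant successors in one shot, and determinism guarantees that each assignment is produced exactly once, so no deduplication is required.

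The main obstacle will be designing the persistent deviation queues so that each pop induces only a constant number of union and push operations independently of the circuit depth, and then arguing that the overall queue size stays polynomial in $K$, which is what turns the $O(\log|Q|)$ bound of Brodal pop-max into the advertised $O(\log(K+1))$ delay. This mirrors the ``persistent heap along a path'' trick used by Eppstein for $k$-shortest paths, adapted to the DAG structure of our circuits and to the Cartesian frontier at $\times$-gates. Once this machinery is in place, correctness follows from two invariants maintained by induction on enumeration steps: that the maximum of $Q$ always matches the score of the next assignment to output (using subset-monotonicity and Lemma~\ref{lemma:maxdnnf} to propagate ordering through $\times$-gates), and that no proof tree is ever popped twice (using a canonical rule for generating successors, together with determinism at $\uplus$-gates to avoid assignment-level duplicates).
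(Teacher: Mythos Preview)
Your proposal sketches a genuinely different architecture from the paper's, and the sketch does not close the gap you yourself flag as ``the main obstacle.'' Two concrete problems stand in the way of the Eppstein-style global-queue-plus-persistent-deviation-heap design.

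First, subset-monotonicity is strictly weaker than additivity. In Eppstein's $k$-shortest-paths algorithm, the heap $H(v)$ of sidetracks from $v$ is keyed by the \emph{additive} detour cost, which is independent of what happens elsewhere on the path; this is what allows $H(v)$ to be precomputed once and shared across all paths through~$v$. Here, if you deviate inside the subtree rooted at some gate $g_1$ that sits under a $\times$-gate with sibling $g_2$, the score of the resulting proof tree is $w(\tau_1' \times \tau_2)$, which depends on the specific $\tau_2$ currently in use. A precomputed per-gate deviation heap therefore has no well-defined priorities, and your phrase ``keyed by their max-score decrement'' is not meaningful without additivity. Second, the branching at $\times$-gates destroys the linear ``last sidetrack'' structure that lets Eppstein generate exactly two successors per pop: a proof tree here has deviations scattered across both subtrees of every $\times$-gate it traverses, and you have not explained how a single persistent heap attached to the proof tree can be updated in $O(1)$ when a deviation is taken deep inside one branch while leaving the other branch's deviations intact. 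These are not details to be filled in later; they are the whole difficulty.

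The paper takes a different route that sidesteps both issues. It does \emph{not} maintain a global queue of proof trees. Instead, it keeps at every gate $g$ a memoization table $T_g$ that stores the first few assignments of $\rel{g}$ in nonincreasing $w$-order (this order is well-defined in isolation and, by subset-monotonicity, is compatible with combination at $\times$-gates). A recursive operation $\get(g,j)$ returns the $j$-th best assignment of~$g$: if $j\le i_g$ it reads the table, and otherwise it computes one more entry by popping a per-gate queue and making at most a constant number of recursive calls on children. At $\times$-gates the recursion uses the grid-frontier ($A\odot B$) idea; at $\uplus$-gates the recursion jumps directly to the \emph{exit gates} (nearest non-$\uplus$ descendants), whose queues were assembled in preprocessing by Brodal union---this is the only place union is used. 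Decomposability then bounds the recursion depth of any top-level call by $O(n)=O(1)$, and each per-gate queue grows by at most one per output, giving the $O(\log(K{+}1))$ delay. So the depth-independence you were aiming for is obtained not via persistent heaps on proof trees but via per-gate memoization plus the exit-gate shortcut.
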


Let us fix for this section the set $X$ of variables of~$C$ (with~$|X|=n$)
and the domain~$D$. 

The rest of this section is devoted to proving Theorem~\ref{thm:ddnnf}. It is
structured in three subsections, corresponding to the three main technical
difficulties to overcome. First, we explain in Section~\ref{subsec:preproc} the
preprocessing phase of the algorithm, where in particular we use Brodal queues
to quickly “jump” over~$\uplus$-gates. Second, in Section~\ref{subsec:AodotB}, we present a
simple algorithm, that we call the \emph{$A\odot B$ ranked enumeration
algorithm}, which conveys in a self-contained fashion the idea of how 
we handle~$\times$-gate
during the enumeration phase of the main algorithm. Last, we present the
enumeration phase in Section~\ref{subsec:enum}.

\subsection{Preprocessing Phase}
\label{subsec:preproc}

The preprocessing phase is itself subdivided in four steps, described next.
\subparagraph*{Preprocessing: first step.}
We preprocess~$C$ in~$O(|C|)$ to ensure that the $\times$-gates of the circuit always have
exactly two inputs. This can easily be done as follows. Remember that
our definition of multivalued circuits does not allow $\times$-gates with no
inputs, so this case does not occur.
We can then eliminate $\times$-gates with one input by
replacing them by their single input. Next, we can rewrite $\times$-gates with
more than two inputs to replace them by a tree of $\times$-gates with two
inputs. For simplicity, let us call~$C$ again the resulting smooth multivalued
d-DNNF circuit in which $\times$-gates always have exactly two inputs.

\subparagraph*{Preprocessing: second step.}
We compute, for every gate~$g$ of $C$ the value~$\#g \colonequals |\rel{g}|$.
This can clearly be done in linear time again, by a bottom-up traversal of~$C$
and using decomposability, determinism and smoothness. Note that~$\#g$ has value at most~$|D|^n$, which is
polynomial (as $n$ is a constant), so this fits into one memory cell.

\subparagraph*{Preprocessing: third step.}
The third step begins by
initializing for every gate~$g$
of~$C$ an empty Brodal queue~$B_g$.
We then populate those queues by a
(linear-time) bottom-up traversal of the circuit, described next.
This traversal will add
to each queue~$B_g$ some priority-data pairs of the form~$(\pp: w(\tau),~ \dd:
(g',1,\tau))$ where~$g'$ has a (possibly empty) directed path to~$g$ and~$\tau
\in \rel{g}$.
We will shortly explain what is the exact content of these queues at the end of
this third preprocessing step, but we already point out one invariant: once we
are done processing a gate $g$ in the traversal, then $B_g$ contains at least
one priority-data pair of this form, i.e., it is non-empty.

The traversal proceeds as follows:
\begin{itemize}
  \item If~$g$ is an input gate labeled with
$\langle x : d \rangle$ corresponding to
the singleton assignment~$\alpha = [x \mapsto d]$, then we push into~$B_g$ the
priority-data pair corresponding to this assignment: $(\pp: w(\alpha),~ \dd: (g,1,\alpha))$.
    \item If~$g$ is a~$\times$-gate with
      inputs~$g_1$ and~$g_2$ then we call \emph{Find-Max} on the
      Brodal queues~$B_{g_1}$ and~$B_{g_2}$ of the inputs. These gates $g_1$ and
      $g_2$ have already been processed, so the queues $B_{g_1}$ and $B_{g_2}$
      are non-empty, and we 
obtain priority-data pairs $(\pp: w(\tau_1),~ \dd: (g'_1,1,\tau_1))$ and
$(\pp: w(\tau_2),~ \dd: (g'_2,1,\tau_2))$, where~$\tau_1\in \rel{g_1}$ and
$\tau_2\in \rel{g_2}$. We push into~$B_g$ the pair $(\pp: w(\tau_1\times
\tau_2),~ \dd: (g,1,\tau_1 \times \tau_2))$.
    \item If~$g$ is a~$\uplus$-gate with
input gates~$g_1,\ldots,g_m$ then we set~$B_g$ to be the union
of~$B_{g_1},\ldots,B_{g_m}$; recall that the union operation on two Brodal
queues can be done in~$O(1)$, so that this union is linear in~$m$.
\end{itemize}
It is clear that this third preprocessing step takes time $O(|C|)$.
To describe what the queues contain at the end of this step,
we need to define the notion of \emph{exit gate} of a~$\uplus$-gate:

\begin{definition}
  \label{def:exit}
For a $\uplus$-gate~$g$ of~$C$, an \emph{exit gate of~$g$} is a gate $g'$ which
is not a $\uplus$-gate (i.e., a $\times$-gate or an input of the circuit) such
that there is a path from $g'$ to $g$ where every gate except $g'$ on this path is
a $\uplus$-gate. We denote by $\exit{g}$ the set of exit gates for $g$. 
\end{definition}

We can then characterize what the queues contain:
\begin{claim}
  \label{claim:preproc-step-3}
When the third preprocessing step finishes, the queues~$B_g$ are as follows.
\begin{itemize}
  \item If~$g$ is an input gate corresponding to the
    singleton assignment~$\alpha = [x \mapsto d]$ then~$B_g$ contains only the
    pair $(\pp: w(\alpha),~ \dd: (g,1,\alpha))$.
  \item If~$g$ is a~$\times$-gate then~$B_g$ contains only one pair, which is of the form $(\pp:
    w(\tau),~ \dd: (g,1,\tau))$ where~$\tau$ is some satisfying assignment of~$g$
    of maximal score  among $\rel{g}$.
  \item If~$g$ is a~$\uplus$-gate then~$B_g$ contains exactly the following:
    for every exit gate~$g'$
    of~$g$, the queue $B_g$ contains one pair of the form $(\pp: w(\tau),~ \dd: (g',1,\tau))$ where~$\tau$
    is some satisfying assignment of~$g'$ of maximal score  among $\rel{g'}$. 
\end{itemize}
\end{claim}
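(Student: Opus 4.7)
The plan is to prove the claim by structural induction on the gates of $C$, following the bottom-up order used by the traversal: I show that the stated characterization of $B_g$ holds immediately after $g$ has been processed. Since queues are not modified after their owner gate is processed, this suffices. The base case of input gates is immediate from the corresponding traversal rule, which pushes exactly the single pair $(\pp: w(\alpha),~ \dd: (g,1,\alpha))$.

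For the inductive step on a $\times$-gate $g$ with inputs $g_1, g_2$, the core sub-claim is that the Find-Max call on $B_{g_i}$ returns a priority-data pair whose priority equals $\max_{\sigma \in \rel{g_i}} w(\sigma)$, for each $i \in \{1,2\}$. When $g_i$ is an input or a $\times$-gate this is immediate from the induction hypothesis, since then $B_{g_i}$ contains a unique pair whose priority is the maximal score in $\rel{g_i}$. When $g_i$ is a $\uplus$-gate, I would first establish the auxiliary identity $\rel{g_i} = \biguplus_{g' \in \exit{g_i}} \rel{g'}$ by iteratively applying the disjoint-union property of smooth deterministic $\lor$-gates through the layers of $\uplus$-gates above each exit gate; the maximum over $\rel{g_i}$ then coincides with the maximum of the maxima of the $\rel{g'}$ for $g' \in \exit{g_i}$, which is exactly the maximum priority stored in $B_{g_i}$ by induction. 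Once $\tau_1, \tau_2$ maximal in $\rel{g_1}, \rel{g_2}$ are in hand, decomposability gives $\var{g_1} \cap \var{g_2} = \emptyset$, so Lemma~\ref{lemma:maxdnnf} yields that $\tau_1 \times \tau_2$ is maximal in $\rel{g_1} \wedge \rel{g_2} = \rel{g}$, which matches the claim for the pair pushed into $B_g$.

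For the inductive step on a $\uplus$-gate $g$ with inputs $g_1, \ldots, g_m$, I would combine the fact that Brodal-queue union preserves the multiset of stored pairs with the identity $\exit{g} = \bigcup_{i=1}^{m} E_i$, where $E_i = \exit{g_i}$ if $g_i$ is a $\uplus$-gate and $E_i = \{g_i\}$ otherwise, which is direct from the definition of exit gate. Applying the induction hypothesis to each $B_{g_i}$ and unioning yields precisely the claimed collection of pairs in $B_g$. The main obstacle is the $\times$-gate step with $\uplus$-gate inputs: the argument that Find-Max produces a genuine maximum of $\rel{g_i}$ requires first establishing the disjoint-union-to-exit identity, itself a small auxiliary induction on the $\uplus$-layers above an exit gate; once this is in place the remainder of the argument is routine bookkeeping.
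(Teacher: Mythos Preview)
Your proposal is correct and follows exactly the approach the paper indicates: a bottom-up induction on gates, using Lemma~\ref{lemma:maxdnnf} for the $\times$-gate case. The paper's own proof is the one-line remark ``It is routine to prove this by bottom-up induction, in particular using Lemma~\ref{lemma:maxdnnf} for the case of $\times$-gates,'' and your write-up is a faithful fleshing-out of that sketch; the only small point you leave implicit is that the sets $E_i$ in the $\uplus$-case are pairwise disjoint (so the union introduces no duplicate exit gates), which follows from determinism and smoothness since $\rel{g'} \subseteq \rel{g_i} \cap \rel{g_j}$ would force $\rel{g'} = \emptyset$.
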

This implies, in particular, that for every~$g\in C$ the queue~$B_g$
contains a pair $(\pp: w(\tau),~ \dd: (g',1,\tau))$ (possibly with~$g'=g$)
where~$\tau$ is a satisfying assignment of~$g$ of maximal score among $\rel{g}$.
\begin{proof}[Proof of Claim~\ref{claim:preproc-step-3}]
  It is routine to prove this by bottom-up induction, in particular using
  Lemma~\ref{lemma:maxdnnf} for the case of~$\times$-gates.
\end{proof}
This concludes the third preprocessing step. Intuitively, the Brodal
queues computed at this step will allow us to jump directly to the exits
of~$\uplus$-gates, without spending time traversing potentially
long paths of~$\uplus$-gates.
Thanks to the constant-time 
union operation on Brodal queues, this third step takes linear time, and in
fact
this is the only part of the proof where we need this bound on the union
operation. More precisely, in the
remainder of the algorithm, we will only use on priority queues $Q$ the operations
\emph{Initialize}, \emph{Push} and \emph{Find-Max} (in~$O(1)$) and
\emph{Pop-Max} (in~$O(\log |Q|)$).

\subparagraph*{Preprocessing: fourth step.}
In the fourth and last preprocessing step, we define some more data structures on
every gate~$g$ of~$C$. 

First, we define for every gate~$g$ a
priority queue~$Q_g$.
For all input gates and $\uplus$-gates, we simply set $Q_g \colonequals
B_g$, but for $\times$-gates we will define $Q_g$ to be new priority queues.
Once this is done, we will only use the priority queues $Q_g$, and can forget about the priority queues~$B_g$.
We construct $Q_g$ for each $\times$-gate $g$ separately, in $O(1)$ time, as
follows.
Letting $g_1$ and $g_2$ be the inputs to~$g$, we call Find-Max
    on~$B_g$. By Claim~\ref{claim:preproc-step-3}, we obtain a
    pair $(\pp: w(\tau),~ \dd: (g,1,\tau))$ where~$\tau$ is some satisfying
    assignment of~$g$ of maximal score among $\rel{g}$. We split $\tau$ into~$\tau_1 \times \tau_2$
    where~$\tau_i \in \rel{g_i}$ for $i \in \{1,2\}$,
    and we define the priority queue~$Q_g$ to contain one priority-data pair,
    namely, $(\pp: w(\tau),~ \dd: (1,1,\tau_1,\tau_2))$.

    Second, we allocate for every gate $g$ a table~$T_g$ of size~$\#g$ (indexed
    starting from~$1$), that will later hold satisfying assignments of~$g$ in
    nonincreasing order of scores, stored into contiguous memory cells starting at the
    beginning of~$T_g$. We do not bother initializing these tables, but we initialize
    integers~$i_g$ to~$0$, that will store the current number of assignments
    stored in~$T_g$. 

    Last, we also initialize to~$0$ a bidimensional bit table~$R_g$ for
    every~$\times$-gate~$g$, of size~$\#g_1 \times \#g_2$ with~$g_1,g_2$ the
    two inputs of~$g$. This can be done in~$O(1)$ with the technique of \emph{lazy
    initialization}, see e.g.,~\cite[Section 2.5]{grandjean2022arithmetic}. The role of
    these tables will be explained later.

This concludes the description of the preprocessing phase of our algorithm. 
In what follows, we will rely on the priority queues $Q_g$, the tables $T_g$,
the integers $i_g$ storing their size, and the tables~$R_g$,
The following should then be clear:
\begin{claim}
  \label{claim:preproc-step-4}
  Once we finish the fourth preprocessing step (concluding the
  preprocessing), all integers~$i_g$ are~$0$, all tables $T_g$ and~$R_g$ are empty, and the
queues~$Q_g$ contain the following.
\begin{itemize}
  \item If~$g$ is an input gate corresponding to the
    singleton assignment~$\alpha = [x \mapsto d]$, then~$Q_g$ contains only the
    pair $(\pp: w(\alpha),~ \dd: (g,1,\alpha))$.
  \item If~$g$ is a~$\times$-gate with inputs~$g_1,g_2$, then~$Q_g$ contains 
only one priority-data pair which is of the form~$(\pp: w(\tau_1\times \tau_2),~ \dd: (1,1,\tau_1,\tau_2))$,
    where~$\tau_1\times \tau_2$ is some satisfying assignment of~$g$
    of maximal score among $\rel{g}$.
  \item If~$g$ is a~$\uplus$-gate, then~$Q_g$ contains, for every exit gate~$g'$
    of~$g$, one pair of the form $(\pp: w(\tau),~ \dd: (g',1,\tau))$ where~$\tau$
    is some satisfying assignment of~$g'$ of maximal score among $\rel{g'}$. 
\end{itemize}
\end{claim}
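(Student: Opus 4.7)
The plan is to verify Claim~\ref{claim:preproc-step-4} by direct inspection of the fourth preprocessing step, leveraging the already-established Claim~\ref{claim:preproc-step-3} as the main input. The three auxiliary statements are immediate: each $i_g$ is explicitly initialized to $0$; each $T_g$ is merely allocated with no assignments stored into its cells; and each $R_g$ is lazily initialized to the all-zero bit-table. So I only need to argue about the contents of the queues $Q_g$.

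For input gates and for $\uplus$-gates, the fourth step sets $Q_g \colonequals B_g$, so the two corresponding bullets of the claim are a verbatim copy of the first and third bullets of Claim~\ref{claim:preproc-step-3}. There is nothing else to check in these cases.

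The only substantive case is that of a $\times$-gate $g$ with inputs $g_1, g_2$. By Claim~\ref{claim:preproc-step-3}, $B_g$ contains exactly one pair of the form $(\pp: w(\tau),~ \dd: (g,1,\tau))$ with $\tau$ some satisfying assignment of $g$ of maximal score among $\rel{g}$. By decomposability of $g$, we have $\var{g_1} \cap \var{g_2} = \emptyset$ and $\rel{g} = \rel{g_1} \times \rel{g_2}$, so $\tau$ factors uniquely as $\tau_1 \times \tau_2$ with $\tau_i \in \rel{g_i}$; this justifies that the ``split'' step used in the preprocessing is well-defined and can be done in $O(1)$ from $\tau$ itself. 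The fourth step then constructs a fresh single-element queue $Q_g$ containing the pair $(\pp: w(\tau_1 \times \tau_2),~ \dd: (1,1,\tau_1,\tau_2))$, which is exactly what the second bullet of the claim asserts (with $\tau_1 \times \tau_2 = \tau$ maximal in $\rel{g}$).

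There is essentially no obstacle here: once Claim~\ref{claim:preproc-step-3} and decomposability are in hand, the statement is a straightforward unrolling of the construction. The only mild point worth making explicit in the write-up is that the fresh single-element $Q_g$ built at a $\times$-gate is not confused with the (no-longer-needed) queue $B_g$, and that the data payload format switches from the $3$-tuple used in $B_g$ to the $4$-tuple $(1,1,\tau_1,\tau_2)$ used in $Q_g$ at $\times$-gates.
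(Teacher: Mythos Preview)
Your proposal is correct and matches the paper's approach: the paper simply states the claim as ``clear'' from the construction and Claim~\ref{claim:preproc-step-3}, and your write-up is just a careful unrolling of exactly that reasoning. There is nothing to add.
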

Again, this in particular implies that each~$Q_g$ stores a satisfying
assignment of~$g$ of maximal score (but the way in which it is represented
depends on the type of~$g$).

\subsection{$A\odot B$ Ranked Enumeration Algorithm}
\label{subsec:AodotB}

Having described the preprocessing phase,
we present in this section a component of the enumeration phase of our
algorithm,
called the \emph{$A \odot B$ ranked enumeration} algorithm.
This simple algorithm will be used at every~$\times$-gate $g$
during the enumeration phase to enumerate all ways to combine the assignments of
the two inputs of~$g$.

Let~$\odot:\RR\times \RR \to \RR$ be an operation which is computable
in $O(1)$ and such that, for
all~$a\leq a'$ and~$b\leq b'$ we have~$a \odot b \leq a' \odot b'$ (this
is similar to subset-monotonicity, and is in fact equivalent; cf.\ 
Lemma~\ref{lem:subsetmonotone} in the appendix).
We explain in this section how, given as input
two tables (indexed starting from~$1$)~$A,B$ of reals of size~$n_1,n_2$
sorted in nonincreasing order, we can enumerate the set of integer
pairs~$\{(i,j) \mid (i,j) \in
\{1,\ldots,n_1\}\times \{1,\ldots,n_2\}\}$, 
in nonincreasing order of the
score~$A[i] \odot B[j]$, with~$O(1)$ preprocessing and a delay~$O(\log K)$
where~$K$ is the number of pairs outputted so far. 

Intuitively, this will be applied at every~$\times$-gate~$g$, with~$[n_1]$
(resp.,~$[n_2]$) representing the satisfying valuations of the first (resp.,
second) input of~$g$ sorted in a nonincreasing order, and $A[l]$ (resp., $B[l]$) representing
the score of those valuations according to the ranking function.

The algorithm is shown in
Algorithm~\ref{alg:toy}, but we also paraphrase it in text with more
explanations. We initialize a two-dimensional bit table $R$ of size~$n_1 \times
n_2$ to contain only zeroes (again using lazy
initialization~\cite[Section 2.5]{grandjean2022arithmetic}), whose
role will be to remember which pairs have been seen so far, and a priority
queue~$Q$ containing only the pair~$(\pp: A[1] \odot B[1] ,~ \dd: (1,1))$; we
set $R[1,1]$ to true because the pair $(1, 1)$ has been seen. Then,
while the queue is not empty, we do the following. We pop (call
\emph{Pop-Max}) from~$Q$, obtaining a priority-data pair of the form $(\pp:
A[i] \odot B[j],~ \dd: (i,j))$. We output the pair~$(i,j)$. Then, for each
$(p,q) \in \{(i+1,j),(i,j+1)\}$ that is 
in the~$[n_1]\times[n_2]$ grid, if the pair $(p,q)$ has not been seen before, then we push
into~$Q$ the pair~$(\pp:A[p] \odot B[q],~ \dd:
(p,q))$ and mark $(p,q)$ as seen in~$R$. We show the following, with a full proof in Appendix~\ref{apx:toycorrect}.

\begin{algorithm}[t]
\SetAlgoLined
\KwData{Two arrays~$A,B$ of real numbers of size~$n_1,n_2$ (indexed from~$1$),
  sorted in nonincreasing order; An operation~$\odot$ as described in the main
  text.}
\KwResult{An enumeration of the pairs~$\{(i,j) \mid (i,j) \in
\{1,\ldots,n_1\}\times \{1,\ldots,n_2\}\}$ in nonincreasing order of the
score~$A[i] \odot B[j]$}
  $R \gets$ bidimensional array of size~$n_1\times n_2$ lazily initialized
  to~$0$\;
  $Q \gets \text{empty priority queue}$\;
  Push $(1,1)$ into $Q$ with priority~$A[1] \odot B[1]$\;
  $R[1,1] \gets \text{true}$\;
\While{$Q$ is not empty}{
  \text{Pop into} $(i,j) \text{~the pair with maximal priority from~} Q$\;
  Output $(i,j)$\;
  \For{$(p,q) \in \{(i+1, j), (i, j+1)\}$} {
    \If{$p\leq n_1$ and $q\leq n_2$ and $R[p][q] = 0$}{
      Push $(p,q)$ into $Q$ with priority $A[p] \odot B[q]$\;
      $R[p][q] \gets \text{true}$\;
    }
  }
}
  \caption{Algorithm for $A\odot B$ ranked enumeration}
  \label{alg:toy}
\end{algorithm}

\begin{toappendix}
  \subsection{Proof of Claim~\ref{clm:toycorrect}}
  \label{apx:toycorrect}
\end{toappendix}
\begin{claimrep}
  \label{clm:toycorrect}
  This $A\odot B$ ranked enumeration algorithm is correct and runs with the
  stated complexity.
\end{claimrep}
\begin{proofsketch}
  The proof is simple and hinges on the following two invariants: 
  \begin{enumerate}
    \item For any pair~$(i,j)$ not enumerated so far, there
      exists a pair~$(i',j')$ (possibly~$(i,j)=(i',j')$) such that~$(i',j')$ is
      in~$Q$,
      and a simple path in the~$[n_1]\times[n_2]$ grid from~$(i',j')$ to~$(i,j)$
      with nondecreasing first and second coordinates such that none of the
      pairs in that path have been outputted yet. 
    \item The queue contains at most~$K+1$ pairs for~$K$ 
      the number of pairs outputted so far.\qedhere
  \end{enumerate}
\end{proofsketch}

\begin{proof}
  We claim that the following invariants hold at the beginning and end of every
  while loop iteration. To simplify notation, when we talk of the
  \emph{$\odot$-score of a pair~$(i,j)\in [n_1]\times[n_2]$} we mean the
  score~$A[i]\odot B[j]$.
  \begin{enumerate}
    \item\label{it:three} For any pair~$(i,j)$ not enumerated so far, there
      exists a pair~$(i',j')$ (possibly~$(i,j)=(i',j')$) such that~$(i',j')$ is
      in~$Q$,
      and a simple path in the~$[n_1]\times[n_2]$ grid from~$(i',j')$ to~$(i,j)$
      with nondecreasing first and second
      coordinates\footnote{More precisely, if we see~$(1,1)$ as being at the bottom left of the grid, then the path
      always goes either up or right.} such that none of the
      pairs in that path have been outputted yet. 
    \item\label{it:four} The queue contains at most~$K+1$ pairs where~$K$ is
      the number of pairs outputted so far.
  \end{enumerate}
  Before proving that these invariants hold, let us argue that they imply the
  claim. It is clear
  that the algorithm terminates thanks to the array~$R$: once a pair is added to
  the queue its~$R$-value is set to true (and never set to false again), so
  it cannot be put back into~$Q$ anymore, and so the queue eventually becomes
  empty.

  We then show that the algorithm enumerates all of~$[n_1]\times[n_2]$:
  indeed, assume by contradiction that the algorithm terminates and that some
  pair~$(i,j)$ has not been outputted. The queue is empty, but invariant
  (\ref{it:three}) implies that there is still an element~$(i',j')$ in the
  queue, a contradiction. It is also clear by construction that the algorithm
  only enumerates pairs of $[n_1] \times [n_2]$. It is clear that it does not
  enumerate the same pair twice thanks to the table~$R$: we only add each pair
  at most once to the queue, hence it will be popped and enumerated at most once.
  The claim for the delay is also clear thanks to invariant (\ref{it:four}).

  Last, we show that the pairs are enumerated in nonincreasing order
  of~$\odot$-score.
  Assume by way of contradiction that
  pair~$(i_1,j_1)$ is enumerated before pair~$(i_2,j_2)$ and that $A[i_1] \odot
  B[j_1] < A[i_2] \odot B[j_2]$. Now, consider the beginning of the while-loop
  iteration where we are about to pop~$(i_1,j_1)$ from~$Q$. Since~$(i_2,j_2)$
  has not been enumerated yet, by invariant (\ref{it:three}) there
  exists~$(i'',j'')$ in~$Q$ with~$\odot$-score at least that of~$(i_2,j_2)$ (by monotonicity of~$\odot$ along the path);
  this contradicts the fact that~$(i_1,j_1)$ has maximal priority when we pop
  it from~$Q$.

  We now prove the invariants. They are true before the first while
  loop iteration: indeed, nothing has been outputted yet so
  invariant~(\ref{it:four}) holds. Further,
  the queue only contains the pair~$(1,1)$, and the subset-monotonicity-like
  property that we assumed on the
  function $\odot$ implies that, for any pair $(i,j)$ of the grid
  $[n_1]\times[n_2]$, the path $(1,1), \ldots, (1,j), (2,j), \ldots, (i,j)$ is a
  nondecreasing path from~$(1,1)$ to~$(i,j)$ satisfying the invariant.

  Assume that the invariants are true at the beginning of some while
  loop iteration and the queue is not empty, and let us show that they still hold
  at the end of that iteration. Call~$Q$ the state of the queue before that
  iteration, $(i,j)$ the pair that is popped, and~$Q'$ the state of the queue
  afterwards. Invariant (\ref{it:four}) is clear by induction hypothesis: we
  popped a pair and added at most two during that iteration.

  Now for
  invariant~(\ref{it:three}), assume by contradiction that there exists a
  pair~$(i_1,j_1)$ which has not been enumerated after the end of that
  iteration and which does not satisfy the invariant. 
  In particular, we have not enumerated~$(i_1,j_1)$ before that iteration, so
  applying the induction hypothesis we obtain a pair $(i',j')$
  (possibly~$(i_1,j_1)=(i',j')$) such that~$(i',j')$ is in~$Q$ and a simple
  path in the~$[n_1]\times[n_2]$ grid from~$(i',j')$ to~$(i_1,j_1)$ with
  nondecreasing first and second coordinates such that none of the pairs in
  that path have been outputted before that iteration. If~$(i,j)$ is not one of
  the pairs of that path then we are done because this path still satisfies the
  conditions at the end of the iteration. Otherwise, consider the pair~$(u,v)$
  after~$(i,j)$ in that path.
  Then~$(u,v)$ is either to the right of~$(i,j)$ or above~$(i,j)$ because the
  path has nondecreasing first and second coordinates.
  Hence, either $(u,v)$ is already in~$Q$ (and therefore will also be in~$Q'$), or
  it is not in~$Q$ and has been pushed into~$Q'$ when
  extracting~$(i,j)$.
  So the path starting at~$(u,v)$ is a valid witnessing path.

  This establishes that the invariants hold and
  concludes the proof of Claim~\ref{clm:toycorrect}.
\end{proof}
\begin{toappendix}
  \subsection{Correctness of the Enumeration Phase (Section~\ref{subsec:enum})}
\end{toappendix}
\subsection{Enumeration Phase}
\label{subsec:enum}

We last move on to the enumeration phase. 
We first give a high-level description of how the enumeration phase works,
before presenting the details.

\subparagraph*{The operation~$\get(g,j)$.}
We will define a recursive operation~$\get$, running in
complexity~\mbox{$O(\log (K+1))$}, that applies to a gate~$g$ 
and integer~$1\leq j \leq i_g + 1$ and does the following. If~$j
\leq i_g$ then $\get(g,j)$ simply returns the satisfying assignment of~$g$ that
is stored in~$T_g[j]$ (i.e., this assignment has already been computed).
Otherwise, if $j=i_g+1$, then $\get(g,j)$ finds the next assignment to be
enumerated, inserts it into~$T_g$, and returns
that assignment. Note that, in this case, calling $\get(g,j)$ modifies the
memory for~$g$ and some other gates~$g'$. Specifically, it modifies 
the tables~$T_{g'}$ and~$R_{g'}$, the queues~$Q_{g'}$, and the integers~$i_{g'}$
for various gates~$g'$ having a directed path to~$g$ (i.e., including
$g'=g$).

When we are not executing an operation~$\get$,
the memory will satisfy the following invariants, for every~$g$ of~$C$:
\begin{itemize}
  \item The table~$T_g$ contains 
    assignments~$\tau\in \rel{g}$, ordered by nonincreasing
    score and with no duplicates; and $i_g$ is the current size of~$T_g$;
  \item For any assignment~$\tau\in \rel{g}$ that does not occur in~$T_g$, it
    is no larger than the last assignment in~$T_g$, i.e., we
    have~$w(\tau) \leq w(T_g[i_g])$.
  \item The queues $Q_g$ will also satisfy some invariants, which will be
    presented later. 
  \item The tables~$R_g$ for the~$\times$-gates record whether we have already
    seen pairs of satisfying assignments of the two children, similarly to how
    this is done in the~$A\odot B$ algorithm.
\end{itemize}
The tables $T_g$ store the assignments in the order in which we find 
them, which is compatible with the ranking function. This allows us, in particular,
to obtain in constant time the~$j$-th satisfying assignment of~$\rel{g}$
if it has already been computed, i.e., if~$j
\leq i_g$. The reason why we keep the assignments in the tables $T_g$ is because
we may reach the gate~$g$ via many different paths throughout the enumeration,
and these paths may be at many different stages of the enumeration on~$g$.

At the top level, if we can implement $\get$ while satisfying the invariants
above, then the enumeration phase of the algorithm is simple to describe:
for~$j$ ranging from~$1$ to~$\#r$, we output~$\get(r,j)$, where~$r$ is the output gate
of~$C$.

\subparagraph*{Implementing~$\get$.}
We first explain the intended semantics of data
values in the queues~$Q_b$:
\begin{itemize}
  \item If~$g$ is a~$\uplus$-gate then~$Q_b$ will always
contain pairs of the form $(\pp: w(\tau),~ \dd: (g',j,\tau))$ where~$g' \in
\exit{g}$ and~$j \in \{1,\ldots,i_{g'}+1\}$ and~$\tau \in \rel{g'}$, and the idea is that at the end of
the enumeration~$\tau$ will be stored at position~$j$ in~$T_{g'}$.
    \item If~$g$ is
a~$\times$-gate, letting $g_1'$ and $g_2'$ be the input gates, then~$Q_b$ will always contain pairs of the form $(\pp:
w(\tau_1\times \tau_2),~ \dd: (j_1,j_2,\tau_1,\tau_2))$ with~$\tau_i \in
\rel{g_i}$ and at the end of the enumeration~$\tau_i$ will be
at position~$j_i$ in~$T_{g_i}$ with~$j_i \in \{1,\ldots,i_{g'_i}+1\}$ for
    all $i \in \{1, 2\}$. 
\item If $g$ is an input gate, then $Q_b$ initially contains the only
  assignment captured by~$g$, becomes empty the first time we call $\get(g,1)$,
  and remains empty thereafter.
\end{itemize}
The implementation of~$\get$ is given in Algorithm~\ref{alg:get}. Intuitively,
the algorithm for~$\uplus$-gates simply consists of interleaving the maximal
assignments of its exit gates, similarly to how one builds a sorted list for
the union of two or more sorted lists. Here, determinism ensures that we do not
get duplicates. The algorithm for~$\times$-gates proceeds similarly to
the~$A\odot B$ algorithm, as explained in the previous section.

This concludes the presentation of the function $\get$, and with it that of
the enumeration phase of the algorithm. The discussion of the delay bound is deferred to Appendix~\ref{apx:correct}.

\begin{algorithm}
\SetAlgoLined
\KwData{The tables~$T_g,R_g$, queues~$Q_g$, integers~$\#g, i_g$, ranking
function~$w$, a gate~$g$, and integer~$j\in
\{1,\ldots i_g +1\}$.}
\KwResult{The~$j$-th satisfying assignment of~$g$.}
 \lIf{$j \leq i_g$}{
   \Return $T_g[j]$
    }
    \tcp{From now on, we have~$j=i_g + 1$}
 \If{$g$ is an input gate}{
   $(\pp: \delta,~ \dd: (g,1,\tau')) \gets $ \text{Pop from }$Q_j$\;
   $\tau \gets \tau'$\;
 }
 \ElseIf{$g$ is a $\uplus$-gate}{
 $(\pp: \delta,~ \dd: (g',j',\tau')) \gets $ \text{Pop from }$Q_j$\;
 $\tau \gets \tau'$\;
 \If{$j'+1 \leq \#{g'}$}{
     $\tau'' \gets \get(g',j'+1)$\;
     \text{Push into} $Q_g$ \text{the priority-data pair} $(\pp: w(\tau''),~ \dd: (g',j'+1,\tau''))$\;
 }
 }
 \ElseIf{$g$ is a $\times$ gate}{
 $(\pp: \delta,~ \dd: (j_1,j_2,\tau_1,\tau_2)) \gets $ \text{Pop from }$Q_j$\;
 $\tau \gets \tau_1 \times \tau_2$\;
  \For{$(p,q) \in \{(j_1+1, j_2), (j_1, j_2+1)\}$} {
    \If{$p \leq \#{g_1}$ and $q \leq \#{g_2}$ and $R_g[p][q] = \text{false}$}{
      $\tau'_1 \gets \get(g_1,p)$\;
      $\tau'_2 \gets  \get(g_2,q)$\;
      $\tau' \gets \tau'_1 \times \tau'_2$\;
     \text{Push into} $Q_g$ \text{the priority-data pair} $(\pp: w(\tau'),~ \dd: (p,q,\tau'))$\;
      $R_g[p][q] \gets \text{true}$\;
    }
   }
  }
  $T_g[i_g+1] \gets \tau$\;
  $i_g \gets i_g + 1$\;
   \Return $\tau$
  \caption{Implementation of~$\get(g,j)$ for the enumeration phase}
  \label{alg:get}
\end{algorithm}

\begin{toappendix}
  \label{apx:correct}

  The fact that algorithm correctly enumerates~$\rel{C}$ in nonincreasing order
  of scores
  and without duplicates follows from the invariants and observations that we have already
  hinted at during the presentation of the algorithm (we do not give the full
  details of the proof). Thus, what remains to be shown is the~$O(\log K)$ bound
  on the delay.

Observe that the delay is simply the total time it takes for a call of the
operation $\get$ on the output gate~$r$ of~$C$ to finish; call such a call a
\emph{top-level call}. Now, during a top-level call, notice that for every
gate~$g'$, the queue~$Q_{g'}$ grows by at most one element, just like in
the~$A\oplus B$ algorithm: this is because, thanks to decomposability, the “trace”
of the algorithm is a tree, hence~$\get$ is called at most once
on~$g'$ during any top-level call. Therefore, after~$K$ assignments have been
outputted, each queue has size at most~$K+1$.

Moreover thanks to the way we “jump”~$\uplus$-gates by directly going to their
exit nodes and thanks to decomposability, 
the total number of recursive calls to $\get$
caused by a top-level call is linear in the number~$n$ of variables, which
is constant. The bound follows, since during a call of~$\get(g)$ for some
gate~$g$, not counting the time for the recursive calls of~$\get$ on
descendants of~$g$, we do a constant amount of computation plus exactly one
call to \emph{Pop-Max} on~$Q_g$, which is of logarithmic complexity.
\end{toappendix}

\section{Application to Monadic Second-Order Queries}
\label{sec:mso}
Having presented our results on ranked enumeration for smooth multivalued DNNFs and d-DNNFs, we 
present their consequences in this section for the problem of ranked enumeration
of MSO query answers on trees. We first present some preliminaries on trees and
MSO, formally define the evaluation problem, and explain how to reduce it to our
results on circuits.

\subparagraph*{Trees and MSO on trees.}
We fix a finite set $\Lambda$ of \emph{tree labels}.
A \emph{$\Lambda$-tree} is then a tree~$T$ whose nodes carry a label from~$\Lambda$,
and which is rooted, ordered, binary, and full, i.e., every node has either no
children (a \emph{leaf}) or exactly one \emph{left child} and one \emph{right
child} (an \emph{internal node}). We often abuse notation and write $T$ to refer
to its set of nodes.

We consider \emph{monadic second-order logic} (MSO) on trees, which 
extends first-order logic with quantification
over sets. The signature of MSO on $\Lambda$-trees allows us
to refer to the left child and right child relationships along with
unary predicates referring to the node labels; and it can express,
e.g., the set of descendants of a node. We only consider MSO queries where the
free variables are first-order. We omit the precise semantics of MSO; see, e.g.,~\cite{Libkin04}.

Fixing an MSO query $\Phi(x_1, \ldots, x_n)$ on $\Lambda$-trees, given a
$\Lambda$-tree $T$, the \emph{answers} of $\Phi$ on~$T$ are the assignments
$\alpha$ on
variables $X = \{x_1, \ldots, x_n\}$ and domain~$T$ such that $\Phi(\alpha)$ holds
on~$T$ in the usual sense. It is known that, for any such query~$\Phi$, given~$T$
and an assignment~$\alpha$ from~$X$ to~$T$, 
we can check whether $\Phi(\alpha(X))$
holds in linear time. What is more, given~$T$, we can enumerate the answers
of~$\Phi$ on~$T$ with linear preprocessing and constant
delay~\cite{bagan2006mso,kazana2013enumeration,amarilli2017circuit}.

We now define \emph{ranked enumeration}. For a tree $T$ and
variables $X = \{x_1, \ldots, x_n\}$, a $(T,X)$-ranking function is simply a
ranking function as in Section~\ref{sec:preliminaries}, whose domain
is the set of nodes of~$T$. We still assume that ranking functions are
subset-monotone. The \emph{ranked enumeration} problem for a
fixed MSO query $\Phi$ with variables~$X$, also denoted $\Enum$, takes an input a
tree~$T$ and a subset-monotone $(T,X)$-ranking function~$w$, and must enumerate
all answers of~$\Phi$ on~$T$, without duplicates, in nonincreasing order of
scores (with ties broken arbitrarily).

\subparagraph*{Ranked enumeration for MSO.}
We are now ready to restate Result~\ref{res:mso} from the introduction:

\begin{theorem}
  \label{thm:mso}
  For any fixed tree signature $\Lambda$ and MSO query $\Phi$ on variables $X$
  on~$\Lambda$-trees,
  given a $\Lambda$-tree $T$ and a subset-monotone $(T,X)$-ranking function~$w$,
  we can solve the $\Enum$ problem for~$\Phi$ on~$T$ and~$w$ with preprocessing
  time $O(|T|)$ and delay $O(\log (K+1))$ where $K$ is the number of answers
  produced so far.
\end{theorem}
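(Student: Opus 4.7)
The plan is to reduce Theorem~\ref{thm:mso} to Theorem~\ref{thm:ddnnf} by constructing, in time $O(|T|)$, a smooth multivalued d-DNNF circuit $C$ on variables $X = \{x_1, \ldots, x_n\}$ and domain $D = T$ whose set of satisfying assignments is exactly the set of answers of $\Phi$ on $T$. Since $\Phi$ is fixed, $n$ is constant, and I can then invoke Theorem~\ref{thm:ddnnf} on $C$ and the given subset-monotone $(T,X)$-ranking function $w$ to obtain preprocessing $O(|C|) = O(|T|)$ and delay $O(\log(K+1))$.

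For the construction, I first compile $\Phi$ into a deterministic bottom-up tree automaton $A$ over the alphabet $\Lambda \times 2^X$, a classical step whose output has size depending only on $\Phi$ (hence constant in data complexity). An assignment $\alpha\colon X \to T$ corresponds to enriching each node $d$ of $T$ with the extra label $\{x_i : \alpha(x_i) = d\}$, and $\alpha$ is an answer of $\Phi$ iff $A$ accepts the enriched tree. I then build, in a single bottom-up pass over $T$, a provenance-style circuit with one gate $g_{d,q,S}$ for each tree node $d$, state $q$ of $A$, and subset $S \subseteq X$, such that $\rel{g_{d,q,S}}$ is exactly the set of assignments $\beta\colon S \to \mathrm{subtree}(d)$ that cause $A$ to reach state $q$ at $d$. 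At a leaf $d$, $g_{d,q,S}$ is defined only if a transition of $A$ on input $(\lambda(d), S)$ leads to $q$, in which case it is the $\times$-gate over the inputs $\lit[x_i][d]$ for $x_i \in S$ (using a dummy constant gate when $S = \emptyset$). At an internal node $d$ with children $d_l, d_r$, $g_{d,q,S}$ is a $\uplus$-gate ranging over all partitions $S = S_d \uplus S_l \uplus S_r$ and all state pairs $(q_l, q_r)$ admitting a transition $(q_l, q_r, (\lambda(d), S_d)) \to q$ of $A$, each disjunct being a $\times$-gate combining the inputs $\lit[x_i][d]$ for $x_i \in S_d$ together with $g_{d_l, q_l, S_l}$ and $g_{d_r, q_r, S_r}$. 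The output gate of $C$ is the $\uplus$-gate over all $g_{\mathrm{root}, q, X}$ for $q$ accepting.

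This yields an $O(|T|)$-size circuit since $|Q|$ and $2^{|X|}$ are constants. Decomposability of the $\times$-gates is immediate because the three factors live on the pairwise disjoint sets of variables $S_d, S_l, S_r$. Smoothness of each $\uplus$-gate is built in, since all of its disjuncts share the same set of variables $S$. Determinism of the $\uplus$-gates follows from the determinism of $A$: any $\beta \in \rel{g_{d,q,S}}$ uniquely determines the split $(S_d, S_l, S_r)$ (by which variables are placed at $d$ versus in each subtree), the enriched label at $d$, and hence, by the determinism of $A$, the unique child-state pair $(q_l, q_r)$; so exactly one disjunct produces $\beta$. Correctness, namely $\rel{C} = \{\alpha : \alpha \text{ is an answer of } \Phi \text{ on } T\}$, follows by a routine bottom-up induction from the operational semantics of $A$.

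The main technical obstacle I anticipate is ensuring that the bottom-up pass truly runs in $O(|T|)$, with all constants absorbed in the data complexity: in particular, one must avoid materialising the many empty gates $g_{d,q,S}$, and represent the (constant-sized) iteration over partitions of $S$ and over transitions of $A$ as a precomputed table that is applied in constant work per tree node. Once $C$ is constructed with decomposability, smoothness, and determinism, Theorem~\ref{thm:ddnnf} applies directly and concludes the proof of the claimed $O(|T|)$ preprocessing and $O(\log(K+1))$ delay bounds.
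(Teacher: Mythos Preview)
Your proposal is correct and takes essentially the same approach as the paper: compile $\Phi$ to a deterministic bottom-up tree automaton over $\Lambda \times 2^X$, build a provenance-style smooth multivalued d-DNNF of size $O(|T|)$ whose satisfying assignments are the query answers, and invoke Theorem~\ref{thm:ddnnf}. The only minor difference is that you index gates by triples $(d,q,S)$, whereas the paper observes (via a short lemma) that in a trimmed well-formed automaton each state $q$ already determines a unique variable set $\dom(q)$, so it suffices to index gates by pairs $(d,q)$; this saves a constant factor but is otherwise the same construction.
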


Recall that, as the total number of answers is at most $|T|^{|X|}$ and $|X|$ is
constant, then this implies a delay bound of $O(\log |T|)$.
The result is simply shown 
by constructing a smooth multivalued d-DNNF representing the query answers.
This
can be done in linear time with existing
techniques (we provide a self-contained proof in
Appendix~\ref{apx:mso}):

\begin{proposition}[\cite{amarilli2017circuit,amarilli2019enumeration}]
  \label{prp:compil}
  For any fixed tree signature $\Lambda$ and MSO query $\Phi$ on variables $X$
  on~$\Lambda$-trees,
  given a $\Lambda$-tree $T$, we can check in time $O(|T|)$ if $\Phi$ has some
  answers on~$T$, and if yes
  we can build in time $O(|T|)$ a smooth multivalued d-DNNF $C$ on domain~$T$
  and variables~$X$ such that $\rel{C}$ is precisely the set of answers
  of~$\Phi$
  on~$T$.
\end{proposition}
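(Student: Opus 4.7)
The plan is to use the classical MSO-to-automata correspondence and then convert the automaton's run into a provenance-style circuit. First, I would translate the fixed MSO query $\Phi(x_1,\ldots,x_n)$ into a bottom-up deterministic tree automaton $\mathcal{A}_\Phi$ reading trees over the enriched alphabet $\Lambda \times 2^X$, where the annotation at each node specifies which free variables are assigned there. The automaton rejects inputs in which some variable of $X$ is not assigned to exactly one node, and otherwise accepts iff the encoded assignment satisfies $\Phi$. By the standard MSO-to-automata translation~\cite{Libkin04}, such an $\mathcal{A}_\Phi$ exists and has size depending only on $\Phi$.

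I would then perform a bottom-up construction, instantiating gates $g_{v,q,Y}$ indexed by nodes $v$ of $T$, states $q$ of $\mathcal{A}_\Phi$, and subsets $Y \subseteq X$. The intended semantics is that $\rel{g_{v,q,Y}}$ is the set of partial assignments mapping $Y$ into the subtree rooted at $v$ and bringing $\mathcal{A}_\Phi$ into state $q$ at $v$. At a leaf $v$ with label $\lambda_v$, we set $g_{v,q,Y}$ to be the $\times$-product of the input singletons $\langle x : v\rangle$ for $x\in Y$, whenever the leaf-transition of $\mathcal{A}_\Phi$ on $(\lambda_v,Y)$ yields $q$. At an internal node $v$ with children $v_l,v_r$, we define $g_{v,q,Y}$ as a $\uplus$ taken over all partitions $(Y_l,Y_r,Y_v)$ of $Y$ and all state pairs $(q_l,q_r)$ that transition to $q$ under label $(\lambda_v,Y_v)$, of the $\times$-combination of $g_{v_l,q_l,Y_l}$, $g_{v_r,q_r,Y_r}$, and the singletons $\langle x : v\rangle$ for $x\in Y_v$. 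The output gate of $C$ is a $\uplus$ of the $g_{r,q_f,X}$ over the root $r$ and the accepting states $q_f$.

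The correctness of $\rel{C}$ follows by bottom-up induction on the semantics. Decomposability is immediate because the partition $(Y_l,Y_r,Y_v)$ enforces disjoint variable sets across the inputs of each $\times$-gate. Smoothness is immediate because all branches of a $\uplus$-gate $g_{v,q,Y}$ range over assignments with support exactly $Y$. Determinism of $\uplus$-gates holds since any assignment captured by $g_{v,q,Y}$ uniquely determines which variables fall in each subtree (fixing $(Y_l,Y_r,Y_v)$), and because $\mathcal{A}_\Phi$ being deterministic fixes $(q_l,q_r)$. The circuit has $O(|T|\cdot |Q|\cdot 2^{|X|})=O(|T|)$ gates since $|Q|$ and $|X|$ are constants, and the construction is a single linear-time pass of $T$. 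Checking whether $\Phi$ has any answer reduces to checking whether the output gate has any satisfying assignment, which can be done in $O(|T|)$ via a bottom-up evaluation on the automaton states.

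The main nuisance I anticipate is handling the degenerate case $Y=\emptyset$: the paper's formalism disallows $\times$-gates with no inputs, so the empty assignment over an empty variable set has no direct representation. I would circumvent this either by introducing a distinguished \emph{true} constant representing the empty assignment on no variables, or by carrying the automaton-state component implicitly in such cases and refraining from materializing trivial products. A secondary care point is to ensure that the MSO-to-automata step produces an automaton in which the constraint ``each variable of $X$ is annotated at exactly one node'' is enforced internally rather than via a separate filter; this is standard but needs verification so that the semantics of $g_{r,q_f,X}$ really matches answers to $\Phi$. The overall construction is essentially the provenance circuit construction of~\cite{amarilli2017circuit,amarilli2019enumeration}, here specialized to the multivalued setting.
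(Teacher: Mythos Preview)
Your proposal is correct and follows essentially the same approach as the paper's self-contained proof: translate $\Phi$ to a deterministic bottom-up tree automaton over $\Lambda \times 2^X$ that enforces well-formedness, then build a provenance circuit via a product construction, with decomposability, smoothness, and determinism following respectively from the variable partition, the fixed support $Y$, and the determinism of the automaton. The only minor difference is that the paper indexes gates by pairs $(n,q)$ rather than triples $(v,q,Y)$, after observing that in an automaton accepting only well-formed trees each state $q$ already determines a unique domain $\dom(q)\subseteq X$; this shaves a $2^{|X|}$ factor off the gate count but is immaterial since $|X|$ is constant, and your anticipated handling of the empty-$Y$ case matches the paper's (allow inputless $\times$-gates temporarily, then eliminate them).
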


Note that we exclude the case where $\Phi$ has no answer on~$T$, because our definition of
multivalued circuits does not allow them to capture an empty set of assignments;
of course we can do this check in the preprocessing, and if there are no answers
then enumeration is trivial.

These results are intuitively shown by translating the MSO query to a tree
automaton, and then computing a provenance circuit of this
automaton by a kind of product construction~\cite{amarilli2015provenance}. The
resulting circuit is a smooth multivalued DNNF, and is additionally a d-DNNF if
the automaton is deterministic. We can then show Theorem~\ref{thm:mso} simply by
performing the compilation (Proposition~\ref{prp:compil}) as part of the
preprocessing, and then invoking the enumeration algorithm of
Section~\ref{sec:d-DNNF} (Theorem~\ref{thm:ddnnf}). Notice that we could also
use the algorithm of Section~\ref{sec:DNNF} (Theorem~\ref{thm:dnnf}), in
particular if it is easier to obtain a nondeterministic tree automaton for the
query, as its provenance circuit would then be a non-deterministic
DNNF~\cite{amarilli2019enumeration}.

\begin{toappendix}
  \section{Self-Contained Proof of Proposition~\ref{prp:compil}}
\label{apx:mso}

In this appendix, we now explain how to prove Proposition~\ref{prp:compil} with
the circuit definitions of the present paper. We re-state that this
self-contained proof is provided only for convenience, and that it features no
conceptual contributions relative to existing
results~\cite{amarilli2017circuit,amarilli2019enumeration}.

\subparagraph*{Changing the MSO query.}
Given the tree alphabet $\Lambda$ used in the MSO query, and letting $X$ be the
set of variables, we work with an extended alphabet $\Gamma \colonequals \Lambda
\times 2^X$. We say that a $\Gamma$-tree $T$ is \emph{well-formed} if, for each $x
\in X$, there is precisely one tree node $n$ of~$T$ such that, writing
$(\lambda, s)$ the label of~$n$, we have $x \in s$.
For a $\Lambda$-tree $T$, given an assignment~$\alpha$ on domain $T$ and
variables $X$, we write $T_\alpha$ the well-formed $\Gamma$-tree where we label each node
with the subset of variables assigned to that node, formally, $T_\alpha$ has the
same skeleton as $T$ and the label of a node $n$ in~$T_\alpha$ is $(\lambda, s)$
where $\lambda \in \Lambda$ is the label of $n$ in~$T$ and $s$ is the subset
of~$X$ such that $\alpha(x) = n$ for all~$x\in s$. For a MSO query $\Phi$ on $\Lambda$-trees with
variables $X$, we denote by~$\Phi'$ the Boolean query on $\Gamma$-trees which
accepts precisely the $\Gamma$-trees $T'$ that are well-formed and where, letting $T$ be the
$\Lambda$-tree obtained from~$T'$ by projecting the labels on their first
component, and letting $\alpha$ be the unique assignment such that $T' =
T_\alpha$, then $\Phi(\alpha)$ holds on~$T$. We claim:

\begin{claim}
  \label{clm:phi2}
  The query $\Phi'$ can be expressed in MSO.
\end{claim}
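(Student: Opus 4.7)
The plan is to write $\Phi'$ explicitly as an MSO sentence on $\Gamma$-trees that combines two ingredients: a well-formedness condition, and a syntactic translation of $\Phi$ that retrieves the assignment $\alpha$ from the $\Gamma$-labeling. The alphabets $\Lambda$, $X$, and $\Gamma = \Lambda \times 2^X$ are finite and fixed, so everything below has fixed size.

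First, I would introduce, for every $x \in X$, the unary abbreviation
\[
\mathrm{Mark}_x(y) \;\colonequals\; \bigvee_{(\lambda,s)\in \Gamma,\ x \in s} P_{(\lambda,s)}(y),
\]
where $P_\gamma$ is the label predicate for $\gamma \in \Gamma$ in the MSO signature of $\Gamma$-trees. Intuitively, $\mathrm{Mark}_x(y)$ holds iff $y$ is marked with variable $x$. Well-formedness is then the first-order (hence MSO) sentence
\[
\mathrm{WellFormed} \;\colonequals\; \bigwedge_{x \in X} \exists^{=1} y\; \mathrm{Mark}_x(y),
\]
using the standard first-order abbreviation for "there exists exactly one".

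Second, I would define, by induction on the structure of $\Phi$, a formula $\widetilde{\Phi}(x_1,\dots,x_n)$ obtained from $\Phi$ by keeping all variables, quantifiers, Boolean connectives, and the left-child/right-child relations unchanged, and by replacing every occurrence of a label atom $P_\lambda(y)$ of the $\Lambda$-signature by the disjunction
\[
\bigvee_{s \subseteq X} P_{(\lambda,s)}(y).
\]
A straightforward structural induction shows that for any $\Lambda$-tree $T$, any assignment $\alpha : X \to T$, and any interpretation of the remaining free variables, $\widetilde{\Phi}$ evaluated on $T_\alpha$ agrees with $\Phi$ evaluated on $T$: the skeletons coincide, the child relations are preserved verbatim, and the rewriting of label atoms is equivalence-preserving under the projection to the first component of $\Gamma$. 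Finally, I set
\[
\Phi' \;\colonequals\; \mathrm{WellFormed} \;\wedge\; \exists x_1 \cdots \exists x_n \Bigl(\bigwedge_{i=1}^{n} \mathrm{Mark}_{x_i}(x_i) \;\wedge\; \widetilde{\Phi}(x_1,\dots,x_n)\Bigr).
\]
Under $\mathrm{WellFormed}$, the existential witnesses $x_i$ are uniquely forced to be the nodes marked by $x_i$, so this sentence captures exactly the intended semantics of $\Phi'$.

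There is no real obstacle here: the only thing to verify is the inductive equivalence between $\widetilde{\Phi}$ on $T_\alpha$ and $\Phi$ on $T$, which is entirely routine. The key observations that make the construction work are simply that $X$ and $\Gamma$ are finite (so the disjunctions in $\mathrm{Mark}_x$ and in the rewriting of label atoms are finite), and that the tree skeleton is preserved when passing from $T$ to $T_\alpha$.
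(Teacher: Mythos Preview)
Your proposal is correct and follows essentially the same approach as the paper's proof: assert well-formedness, existentially bind each $x_i$ to the unique node marked with $x_i$, and evaluate $\Phi$. You are simply more explicit than the paper, in particular spelling out the translation $\widetilde{\Phi}$ of the $\Lambda$-label atoms into the $\Gamma$-signature, which the paper leaves implicit when it says ``evaluate the original query $\Phi$''.
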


\begin{proof}
  We can clearly express in MSO that the tree is well-formed. Further, once the
  tree is asserted to be well-formed, we can quantify variables $x_1, \ldots,
  x_n$ that are assigned to the unique tree nodes that respectively have $x_1,
  \ldots, x_n$ in their label, and then evaluate the original query~$\Phi$.
\end{proof}

\subparagraph*{Tree automata.}
We use the notion of \emph{bottom-up
deterministic tree automata}:

\begin{definition}
  Let $\Gamma$ be an alphabet. A \emph{$\Gamma$-bottom-up deterministic tree automaton} (bDTA)
  consists of a finite set $Q$ of \emph{states}, an \emph{initial function}
  $\iota\colon \Gamma \to Q$, a \emph{transition function} $\delta\colon Q
  \times Q \times \Gamma \to Q$, and a subset $F \subseteq Q$ of \emph{final
  states}.
\end{definition}

We omit the standard definitions of what it means for a bDTA to \emph{accept} a
tree~\cite{tata}. We always assume that bDTAs are \emph{trimmed}, i.e., for
every state $q$, there is a tree accepted by~$A$ on which state~$q$ appears:
this can be enforced in linear time on~$A$ by removing useless states, provided
that the language of~$A$ is non-empty.

It is well-known~\cite{thatcher1968generalized} that the Boolean MSO query
$\Phi'$ on $\Gamma$-trees
can be translated to a $\Gamma$-bDTA $A$ which is \emph{equivalent} in the sense
that, for any $\Gamma$-tree $T$, the bDTA $A$ accepts $T$ iff $T$
satisfies~$\Phi'$. We will use $A$ to construct the circuit and prove
Proposition~\ref{prp:compil}. We will also need the notion of a
\emph{well-formed} bDTA:

\begin{definition}
  Let $\Lambda$ be a tree alphabet, $X$ be a set of variables, and let $\Gamma
  \colonequals \Lambda \times 2^X$. A $\Gamma$-bDTA is \emph{well-formed}
  if it accepts only well-formed trees.
\end{definition}

We make an immediate observation on such bDTAs:

\begin{claim}
  \label{clm:domain}
  Given a well-formed $\Gamma$-bDTA $A$ with state set~$Q$,
  writing $\Gamma = \Lambda \times 2^X$,
  there is a function $\dom\colon Q \to 2^X$ with the following property: for any
  $\Gamma$-tree $T$, letting $q$ be the state obtained at the root when
  evaluating $A$ on~$T$,
  then the subset of variables of~$X$ that occurs in the labels of the nodes
  of~$T$ is precisely~$\dom(q)$. Formally, we have $\dom(q) = \bigcup_{n \in T}
  \pi_2(n)$, where $\pi_2\colon T \to 2^X$ maps every node of~$T$ to the second
  component of its label. What is more, if $q\in Q$ is final then $\dom(q) = X$.
\end{claim}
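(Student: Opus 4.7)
The plan is to define $\dom(q)$ by picking, for each state $q \in Q$, any subtree that evaluates to $q$ at its root, and setting $\dom(q)$ to be the set of variables occurring in the labels of that subtree. Such a subtree exists because $A$ is trimmed, so we only need to justify that the definition is independent of the choice of subtree, and verify the two claims in the statement.

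First I would establish the well-definedness. Let $T_1$ and $T_2$ be two $\Gamma$-trees evaluating at their root to the same state $q$. Since $A$ is trimmed, there is an accepted tree $T'$ in which $q$ occurs at some node $u$; let $C$ be the context obtained from~$T'$ by replacing the subtree rooted at~$u$ by a hole. By bottom-up determinism, plugging any tree that evaluates to~$q$ into the hole of~$C$ yields a tree that reaches a final state at the root, i.e., an accepted tree. Hence both $C[T_1]$ and $C[T_2]$ are accepted, and therefore well-formed by assumption on~$A$. In a well-formed tree each variable of~$X$ occurs in exactly one node label, so the variable sets appearing in~$T_1$ and in~$C$ form a partition of~$X$, and similarly for~$T_2$ and~$C$. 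This forces the variable set appearing in~$T_1$ to equal $X \setminus V(C)$, where $V(C)$ denotes the variables occurring in the labels of the non-hole nodes of~$C$; the same holds for~$T_2$. Consequently $T_1$ and $T_2$ have the same variable set, and $\dom(q)$ is well-defined.

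Next, the stated property follows by a direct bottom-up induction (or, equivalently, by applying the well-definedness result to the given tree~$T$ itself and the canonical representative chosen to define $\dom(q)$): for any $\Gamma$-tree $T$ evaluating at the root to state~$q$, the set of variables appearing in the labels of~$T$ equals $\dom(q)$. Finally, if $q \in F$ is a final state, then any $T$ evaluating to~$q$ is accepted, hence well-formed, so every variable of~$X$ occurs in~$T$; therefore $\dom(q) = X$.

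The main obstacle is the well-definedness argument in the second paragraph, since it is the only place where the trimmed and well-formedness assumptions interact. The rest of the claim is then essentially immediate.
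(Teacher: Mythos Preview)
Your proposal is correct and follows essentially the same approach as the paper: define $\dom(q)$ from an arbitrary subtree reaching~$q$ (which exists because $A$ is trimmed), and show well-definedness by a substitution argument exploiting that $A$ accepts only well-formed trees. Your use of an explicit context~$C$ and the partition $X = V(C) \uplus V(T_i)$ is arguably slightly cleaner than the paper's formulation, but the underlying idea is identical.
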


Accordingly, the \emph{domain} $\dom(q)$ of a state~$q \in Q$ of such a bDTA is the subset
of variables to which it is sent by this function.

\begin{proof}[Proof of Claim~\ref{clm:domain}]
  As $A$ is trimmed, we know that for every state $q$ there is a $\Gamma$-tree
  $T$ accepted by~$A$ such that $q$ appears in the run of~$A$ on~$T$. As $A$
  is well-formed, we know that $T$ is well-formed. Let $T'$ be a subtree
  rooted at some node which is mapped to~$q$ in this run. Define $\dom(q)$ to be the set $Y$ of
  variables occurring in the labels of~$T'$.

  To show that this definition is consistent, let us assume by
  contradiction that there is another 
$\Gamma$-tree
  $T_2$ accepted by~$A$ such that $q$ appears in the run of~$A$ on~$T_2$, and
  let $T_2'$ be a subtree
  rooted at some node which is mapped to~$q$ in this run.
  Assume by contradiction that the set of variables occurring in the
  labels of~$T_2'$ is different, say $Y'$ with $Y' \neq Y$. Let $T_3$ be the
  tree obtained from~$T$ by replacing the subtree~$T'$ with $T'_2$.

  We claim
  that~$T_3$ is not well-formed. Indeed, as $T$ is well-formed and the
  variables occurring in the labels of~$T'$ are~$Y$, then the variables occurring
  in the labels of $T \setminus T'$ are $X \setminus Y$, and as $Y \neq Y'$, we
  know that either $X \setminus Y$ and $Y'$ are not disjoint (i.e., a variable
  occurs twice) or $(X \setminus Y) \cup Y' \neq X$ (i.e., a variable in
  missing). In both cases, $T_3$ is not well-formed. But $A$ accepts~$T_3$,
  because it maps the root of~$T_2'$ in~$T_3$ to~$q$ by hypothesis, and then the
  run can be completed like in~$T \setminus T'$ and $T_3$ is accepted.

  Thus,
  $A$ accepts the tree $T_3$ which is not well-formed, contradicting the
  assumption that~$A$ is well-formed.

  The last sentence of the claim is simply by observing that if $q$ is final
  then any tree $T$ where the root is mapped by~$A$ to~$q$ is accepted by~$A$, so as
  $A$ is well-formed it must be the case that~$T$ is well-formed so that $\dom(q)$, the set of
  all variables occurring in~$T$, must be~$X$.
\end{proof}

\subparagraph*{Provenance circuits for tree automata.}
We finally claim that we can construct provenance circuits:

\begin{proposition}
  \label{prp:circuit}
  Let $\Lambda$ be an alphabet, let $X$ be a non-empty set of variables, let $\Gamma \colonequals
  \Lambda \times 2^X$. Given a well-formed $\Gamma$-bDTA $A$ whose language is
  non-empty, given
  a $\Lambda$-tree $T$, we
  can build in time $O(|A| \times |T|)$ a smooth multivalued d-DNNF circuit $C$
  on domain $T$ and variables~$X$ such that $\rel{C}$ is precisely the set of
  assignments $\alpha$ such that $A$ accepts $T_\alpha$.
\end{proposition}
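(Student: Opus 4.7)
The plan is a standard provenance construction: we traverse $T$ bottom-up in parallel with $A$ and, for every tree node $n$ and every state $q\in Q$ of $A$, introduce a gate $g_{n,q}$ intended to satisfy
\[
\rel{g_{n,q}} = \{\alpha\colon \dom(q)\to \text{nodes}(T_n)\mid A \text{ reaches state } q \text{ at the root of } (T_n)_\alpha\},
\]
where $T_n$ denotes the subtree rooted at $n$ and $\dom(q)$ is the domain function from Claim~\ref{clm:domain}. This choice of variable set is precisely what is needed to obtain smoothness: by Claim~\ref{clm:domain}, the variables used in any valid assignment realizing state $q$ depend only on $q$, not on~$n$. Before creating any gates, I would do a preliminary linear-time bottom-up pass computing for each $n$ the set $\mathrm{reach}(n)\subseteq Q$ of states that can be reached at $n$ for some assignment; only pairs $(n,q)$ with $q\in\mathrm{reach}(n)$ get a gate, guaranteeing that all $\rel{g_{n,q}}$ are non-empty (so the circuit is well-defined under our convention).

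The gates are then built as follows. If $n$ is a leaf with label~$\lambda$, then for each $s\subseteq X$ with $\iota(\lambda,s)=q$ (by Claim~\ref{clm:domain} any such $s$ equals $\dom(q)$, so in fact there is only one) we form a $\times$-gate $\prod_{x\in s}\lit[x][n]$ over singleton-assignment inputs, and let $g_{n,q}$ be a $\cup$-gate whose single input is this product (with a trivial constant-true gadget when $s=\emptyset$). If $n$ is an internal node with children $n_1,n_2$ and label~$\lambda$, then for each transition $\delta(q_1,q_2,(\lambda,s))=q$ with $q_1\in\mathrm{reach}(n_1)$ and $q_2\in\mathrm{reach}(n_2)$, we form the $\times$-gate $g_{n_1,q_1}\times g_{n_2,q_2}\times \prod_{x\in s}\lit[x][n]$, and let $g_{n,q}$ be the $\cup$-gate of all such products. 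The output gate is a $\cup$-gate over $g_{r,f}$ for $f\in F$, where $r$ is the root of~$T$. The total size is $O(|T|)$ gates times $O(|A|)$ transitions per node, hence $O(|A|\times|T|)$, and since $|A|$ is fixed when we prove Proposition~\ref{prp:compil}, this is~$O(|T|)$.

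The three structural verifications are: \textbf{smoothness} of each $\cup$-gate $g_{n,q}$ because every product contributing to it has variable set $\dom(q_1)\cup\dom(q_2)\cup s=\dom(q)$ by Claim~\ref{clm:domain}; \textbf{decomposability} of each $\times$-gate because the well-formedness property in Claim~\ref{clm:domain} forces $\dom(q_1)$, $\dom(q_2)$ and $s$ to be pairwise disjoint (otherwise a variable would appear twice in an accepted tree, contradicting well-formedness); and \textbf{determinism} of each $\cup$-gate because, given an assignment $\alpha\in\rel{g_{n,q}}$, the bottom-up determinism of $A$ forces a unique run on $(T_n)_\alpha$, which pins down the unique $s$ (as the label of $n$ in $(T_n)_\alpha$) and the unique pair $(q_1,q_2)$ of child states, hence exactly one input product matches. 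Correctness of the overall circuit, i.e., $\rel{C}$ equals the set of $\alpha$ with $A$ accepting $T_\alpha$, then follows by a straightforward bottom-up induction. The main obstacle is simply bookkeeping around smoothness and the edge cases where $\dom(q)=\emptyset$ or where a state is unreachable at a node: both are handled by the preliminary $\mathrm{reach}$ pass together with trivial gadgets so that every gate we create has non-empty $\rel$ and matches the expected variable set~$\dom(q)$.
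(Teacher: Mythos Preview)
Your proposal is correct and follows essentially the same provenance-style product construction as the paper: gates $g_{n,q}$ indexed by node/state pairs, products over children states plus local literals, with smoothness coming from the $\dom$ function of Claim~\ref{clm:domain} and determinism from the bottom-up determinism of~$A$. The only cosmetic differences are that you handle emptiness via a preliminary $\mathrm{reach}$ pass while the paper instead allows $\cup$- and $\times$-gates with no inputs and eliminates them afterwards, and that the paper \emph{explicitly} restricts to pairs $(q_1,q_2)$ with $\dom(q_1)\cap\dom(q_2)=\emptyset$ and $s\subseteq X\setminus(\dom(q_1)\uplus\dom(q_2))$ whereas you argue this is forced; your argument is right but tacitly uses that $A$ is trimmed (so that the target state $q$ of any surviving transition is useful, and plugging a doubly-labeled subtree under it would yield a non-well-formed accepted tree).
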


\begin{proof}
  In this proof, we will construct a variant of smooth multivalued d-DNNF
  circuits where we allow $\cup$-gates and $\times$-gates with no inputs.
  Intuitively, for a $\cup$-gate $g$ with no input we have $\rel{g} =
  \emptyset$, and for a $\times$-gate $g$ with no input we have $\rel{g} =
  \{[]\}$. One can show that, given such a circuit, we can rewrite it in
  linear time to a smooth multivalued d-DNNF without such gates, simply by
  eliminating all such gates bottom-up (see~\cite{amarilli2017circuit} for
  similar results)
  Note that, as the set of variables is non-empty and
  as the language of~$A$ is non-empty, the resulting smooth multivalued d-DNNF
  $D$ is such that $\rel{D} \neq \emptyset$ and $[] \notin \rel{D}$.

  Let $Q$ be the state set of~$A$, let $\iota$ be its initial function, let
  $\delta$ be its transition function, let $F$ be its set of final states.
  We use Claim~\ref{clm:domain} to define the function $\dom$ on~$Q$.
  We create one $\cup$-gate $g_{n,q}$ for every state~$q \in Q$ and node~$n\in T$, intuitively
  denoting that the automaton is at state~$q$ at node~$n$.

  For every leaf node $n$ of~$T$, letting $\lambda$ be the label of~$n$, for every subset $Y \subseteq X$, we create a
  $\times$-gate having as input the variables $\langle y: n\rangle$ for each $y
  \in Y$ (possibly none), with a wire to the gate $g_{n,q}$ where $q
  \colonequals \iota((\lambda, Y))$.

  For every internal node $n$ of~$T$ with children $n_1$ and $n_2$,
  letting~$\lambda$ be the label of~$n$, for any pair
  of states $q_1$ and $q_2$ such that $Y_1 \colonequals \dom(q_1)$ and
  $Y_2 \colonequals \dom(q_2)$ are disjoint,
  for any $Y \subseteq X \setminus (Y_1 \uplus Y_2)$, we create a $\times$-gate
  having as inputs $g_{n_1, q_1}$, $g_{n_2, q_2}$, and the variables $\langle y:
  n \rangle$ for each $y \in Y$ (possibly none), with a wire to the gate
  $g_{n,q}$ where $q \colonequals \delta(q_1, q_2, (\lambda, Y))$.

  Last, for the root node~$n_0$ of~$T$, we create a $\lor$-gate having as input
  all gates $g_{n_0,q}$ for $q \in F$, and set it to be the output gate.

  This construction runs in time $O(|A| \times |T|)$. We first check that the
  circuit is decomposable. For this, we show by an immediate induction that, for
  any $n\in T$ and $q\in Q$, then $\var{g_{n,q}} = \dom(q)$. In particular,
  $\dom(C) = X$. Decomposability is then clear. For smoothness, it is again easy
  to check that for every $n\in T$ and $q\in Q$, for every gate $g_{q,n}$, its
  inputs (if any) are $\times$-gates whose domain is $\dom(q)$. Further, for the
  output gate, its inputs are gates of the form $g_{n_0,q}$ where $q$ is final
  so $\dom(q) = X$.

  We next show an invariant: for every node $n$ of~$T$, for every state $q$
  of~$Q$, then $\rel{g_{n,q}}$ is precisely the set of partial assignments
  $\alpha$
  of~$X$ to the subtree $T_n$ of~$T$ rooted at~$n$ such that $A$ maps the root
  of~$(T_n)_\alpha$ to~$q$, where $(T_n)_\alpha$ denotes as expected the $\Gamma$-tree obtained from the $\Lambda$-tree $T_n$
  by setting the second component in~$2^X$ of every node $n'$ according to the
  subset of the variables of~$X$ that are mapped to~$n'$ by~$\alpha$.
  (Note that $(T_n)_\alpha$ is not necessarily well-formed because $\alpha$
  is a partial assignment.)

  The invariant is shown by
  induction. If $n$ is a leaf, letting $\lambda$ its label, then indeed for
  every $Y \subseteq 2^X$ the partial assignment mapping the variables of~$Y$
  to~$n$ is captured by the gate $g_{n, \iota(\lambda, Y)}$. If $n$ is an
  internal node with children $n_1$ and $n_2$ and label $\lambda$, we show both
  directions. First, consider a partial assignment $\alpha$ in $\rel{g_{n,q}}$
  for some~$q$. By construction of the circuit,
  it must witness that there are partial assignments
  $\alpha_1$ of~$Y_1$ to~$T_{n_1}$ in $\rel{g_{n_1, q_1}}$ and 
  $\alpha_2$ of~$Y_2$ to~$T_{n_2}$ in $\rel{g_{n_2, q_2}}$
  and $\alpha'$ of~$Y$ to~$n$, such that $Y_1$ and $Y_2$ and $Y$ are pairwise
  disjoint and $\alpha = \alpha_1 \times \alpha_2 \times \alpha$, and 
  such that $\delta(q_1, q_2, (\lambda, Y)) = q$.
  By induction hypothesis, this means that the automaton maps the root of
  $(T_{n_1})_{\alpha_1}$ to state~$q_1$ and  the root of
  $(T_{n_2})_{\alpha_2}$ to state~$q_2$, so that it maps the root of~$T$ to
  state~$q$. Conversely, consider a partial assignment~$\alpha$ such that the
  root of $(T_n)_\alpha$ is mapped by~$A$ to~$q$. Consider the restrictions
  $\alpha_1$, $\alpha_2$, and $\alpha'$ of~$\alpha$ to~$T_{n_1}$, $T_{n_2}$, and
  $n$ respectively. It must be the case that the root
  of~$(T_{n_1})_{\alpha_1}$ is mapped by~$A$ to some state~$q_1$, the root
  of~$(T_{n_2})_{\alpha_2}$ is mapped by~$A$ to some state~$q_2$, and
  we have $\delta(q_1, q_2, (\lambda, Y)) = q$, where $Y$ is the set of
  variables mapped by~$\alpha'$. Now, by induction hypothesis, we have 
  $\alpha_1 \in \rel{g_{n_1,q_1}}$ and
  $\alpha_2 \in \rel{g_{n_1,q_2}}$. Thus the construction of the circuit
  witnesses that $\rel{g_{n,q}}$ contains $\alpha$.

  The inductive claim implies that the circuit is deterministic. Indeed, the
  root gate is deterministic because, if the same $\alpha$ is captured by two
  different inputs $g_{n_0, q}$ and $g_{n_0, q'}$ of the root gate with $q \neq
  q'$, then the root of the tree $T_\alpha$ would be mapped both to~$q$ and~$q'$
  by~$A$, contradicting the determinism of~$A$. Further, for any $n \in T$ and
  $q\in Q$, we show that $g_{q,n}$ is deterministic. This is immediate
  if~$n$ is a leaf because the gates $g_{q,n}$ then have at most one
  input, so let us assume that $n$ is an internal node with
  children~$n_1$ and~$n_2$. Let us assume by
  contradiction that the same partial
  assignment $\alpha$ is in $\rel{g}$ and $\rel{g'}$ for two different inputs
  of~$g_{q,n}$. Let $Y$ be the set of variables mapped to~$n$ by~$\alpha$, let
  $Y_1$ and $Y_2$ be the set of variables mapped to nodes of $T_{n_1}$ and
  $T_{n_2}$ respectively. The sets $Y$ and $Y_1$ and $Y_2$ must be pairwise
  disjoint. By the consideration on the domains of gates, 
  the two inputs of~$g_{q,n}$ must be two $\times$-gates taking the
  conjunction of the variables $\langle y : n\rangle$ for $y \in Y$. If they are
  different gates, it must be the case that the two other gates that they
  conjoin are of the form $g_{n_1, q_1}$ and $g_{n_2, q_2}$ for the first gate,
  and $g_{n_1, q_1'}$ and $g_{n_2, q_2'}$  for the second gate, with $(q_1,q_2)
  \neq (q_1', q_2')$. But it must be the case by the invariant that the root of
  $(T_{n_1})_{\alpha_1}$  is mapped by~$A$ to both $q_1$ and $q_1'$, and
that the root of
  $(T_{n_2})_{\alpha_2}$  is mapped by~$A$ to both $q_2$ and $q_2'$. So $q_1 =
  q_1'$ and $q_2 = q_2'$, a contradiction.

  Last, the inductive claim applied to the root implies that $\rel{C}$ is the
  set of assignments $\alpha$ such that $A$ accepts $T_\alpha$, which is what we
  wanted to show.

  Thus we have established that the circuit is a smooth multivalued d-DNNF with
  the correct semantics, which concludes the proof.
\end{proof}

Finally, the proof of Proposition~\ref{prp:compil} follows by using
Claim~\ref{clm:phi2} to construct $\Phi'$, by constructing the equivalent
bDTA~$A$, and finally by using Proposition~\ref{prp:circuit} to build the circuit.

\end{toappendix}

\section{Conclusion}
\label{sec:conclusion}
We have studied the problem of ranked enumeration for tractable circuit classes
from knowledge compilation, namely, DNNFs and d-DNNFs,
in the setting of multivalued circuits so as to apply these
results to ranked enumeration for MSO query answers on trees.
We have shown that the latter task can be solved with linear-time
preprocessing and delay logarithmic in the number of answers produced so far, in
particular logarithmic delay in the input tree in data complexity.
This result on trees is the
analogue of a previous result on words~\cite{bourhis2021ranked}, achieving the
same bounds but for a different notion of ranking functions.

We leave several questions open for future work. For instance, our
efficient algorithms always assume that the input circuits are smooth:
although this can be ensured ``for free'' in the setting of MSO on trees, it
is generally quadratic to enforce on an arbitrary input circuit~\cite{shih2019smoothing}. It may be
possible to perform enumeration directly on non-smooth circuits, or on
implicitly smoothed circuits, e.g., with special gates as
in~\cite{amarilli2017circuit}.
It would also be natural to study this problem in combined complexity, or for
free second-order variables,
though our algorithms cannot work on the RAM model if we need to store a
superpolynomial number of assignments in memory.
Last, it may be
possible to extend our algorithms to more general ranking functions than the one
we study, for instance by leveraging the framework of MSO cost
functions used in~\cite{bourhis2021ranked}, or using weighted
logics~\cite{droste2005weighted}, or possibly replacing subset-monotonicity by
a weaker guarantee.

Last, it would be interesting to study whether
our results can extend to the support of \emph{updates}, e.g., reweighting
updates to the ranking functions, or updates on the underlying circuits or (for
MSO queries) on the tree, as in~\cite{losemann2014mso}
or~\cite{amarilli2019enumeration}. However, this is more difficult than the
case of updates for non-ranked enumeration, because our algorithms use larger
intermediate structures which are more challenging to maintain.

\bibliography{main}

\appendix

\end{document}